\newenvironment{proof}{\begin{IEEEproof}}{\end{IEEEproof}}
\newcommand{\bbar}[1]{\setbox0=\hbox{$#1$}\dimen0=.2\ht0 \kern\dimen0 \overline{\kern-\dimen0 #1}}
\newcommand{\Z}{{\mathbb Z}}
\newcommand{\ZZ}{{\mathbb Z}}
\newcommand{\D}{{\mathcal D}}
\newcommand{\A}{{\mathcal A}}
\newcommand{\I}{{\mathcal I}}
\newcommand{\mindet}[1]{\hbox{\rm det}_{min}\left( #1\right)}
\newcommand{\diag}{{\rm diag}}
\newcommand{\R}{{\mathbb R}}
\newcommand{\C}{{\mathbb C}}
\newcommand{\OO}{{\mathcal O}}
\newcommand{\Q}{{\mathbb Q}}
\newtheorem{thm}{Theorem}[section]
\newtheorem{theorem}[thm]{Theorem}
\newtheorem{cor}[thm]{Corollary}
\newtheorem{corollary}[thm]{Corollary}
\newtheorem{lem}[thm]{Lemma}
\newtheorem{lemma}[thm]{Lemma}
\newtheorem{proposition}[thm]{Proposition}
\newtheorem{prop}[thm]{Proposition}
\newtheorem{defn}[thm]{Definition}
\newtheorem{definition}[thm]{Definition}
\newtheorem{remark}[thm]{Remark}
\providecommand{\abs}[1]{\ensuremath{\left\lvert #1 \right\rvert}}
\providecommand{\norm}[1]{\ensuremath{\left\Vert #1 \right\Vert}}
\providecommand{\vv}[1]{\textquotedblleft #1\textquotedblright}
\DeclareMathOperator*{\Vol}{Vol}
\DeclareMathOperator*{\tr}{tr}
\DeclareMathOperator*{\argmin}{argmin}
\DeclareMathOperator*{\SNR}{SNR}
\renewcommand{\IEEEQED}{\IEEEQEDopen}
\begin{document}

\title{ Almost universal codes achieving ergodic MIMO capacity within a constant gap}

\author{Laura Luzzi and Roope Vehkalahti

\thanks{The research of R.~Vehkalahti was funded by   Academy of Finland  grant  \#252457 and is funded by the Finnish Cultural Foundation.}
\thanks{Part of this work appeared at ISIT 2015 \cite{ISIT2015_SISO, ISIT2015_MIMO}.}
\thanks{L. Luzzi is with ETIS (UMR 8051, ENSEA, Universit\'e de Cergy-Pontoise, CNRS), 95014 Cergy-Pontoise, France (e-mail: laura.luzzi@ensea.fr).}
\thanks{R.~Vehkalahti is with the Department of Mathematics and Statistics, FI-20014, University of Turku, Finland  (e-mail: roiive@utu.fi)}
}

\maketitle

\begin{abstract}
This work addresses the question of achieving capacity with lattice codes in multi-antenna block fading channels when the number of fading blocks tends to infinity. \\
A design criterion based on the normalized minimum determinant is proposed for division algebra multiblock space-time codes over fading channels; this plays a similar role to the Hermite invariant for Gaussian channels. \\ 
It is shown that this criterion is sufficient to guarantee transmission rates within a constant gap from capacity both for slow fading channels and ergodic fading channels. This performance is achieved both under maximum likelihood decoding and naive lattice decoding.
In the case of independent identically distributed Rayleigh fading, it is also shown that the error probability vanishes exponentially fast.\\
In contrast to the standard approach in the literature which employs random lattice ensembles, the existence results in this paper are derived from number theory. First the gap to capacity is shown to depend on the discriminant of the chosen division algebra; then class field theory is applied to build families of algebras with small discriminants. The key element in the construction is the choice of a sequence of division algebras whose centers are
number fields with small root discriminants.
\end{abstract}

\begin{keywords}
MIMO, block fading, space-time codes, number theory, division algebras
\end{keywords}

\section{Introduction}
It is well-known \cite{Tel} that in ergodic multiple-input multiple-output (MIMO) fading channels with channel state information at receiver only, the maximal mutual information is achieved with Gaussian circularly symmetric random inputs.  In this case the existence of capacity-achieving codes can be proven with standard random coding arguments.  \\
 It has been shown that by combining simple modulation and strong outer codes such as turbo or LDPC codes, it is possible to operate at rates close to capacity with small error probability  \cite{HoBr,Sanderovich}. However, to the best of our knowledge, the problem of achieving capacity with \emph{explicit codes} for all ranges of signal-to-noise ratio (SNR) is still open.\\ 
This is in strong contrast to the classical complex Gaussian single antenna channel, where the capacity is $\log(1+\SNR)$ and it is known that several lattice code constructions achieve $\log \SNR -C$ rates. These constructions are based on  a rich theory of lattice codes developed to attack these questions. At the heart of this theory are sphere packing arguments that prove that the performance of a lattice code in the classical Gaussian channel can be roughly estimated by the size of a geometrical invariant of the lattice, the \emph{Hermite invariant}.  In particular the Hermite invariant can be used to roughly measure how close to capacity a family of lattices can get.
	This connection has been extremely fruitful and has led to a monumental work connecting algebra, geometry and information theory \cite{CS}.  

In the case of fading channels the situation is quite different.
While it is well-known that space-time lattice codes from division algebras \cite{basic} provide good performance over multiple antenna fading channels, 
and a rich algebraic theory has been developed to 
optimize single codes 
\citep{ORBV}, there are as yet no results connecting capacity questions and the geometry of lattices.
 The minimum determinant criterion \cite{TSC} 
 allows to improve the worst-case pairwise error probability in the high-SNR regime, when coding over a  single fading block. Optimizing this value has been the major concern of several works in space-time coding \cite{ORBV, WX,VHLR}.
 However, no  design criterion has been suggested for approaching the MIMO capacity with explicit lattice codes.

In this paper we address this problem 
and show that when we are allowed to encode and decode over a growing number of fading blocks, 
the \emph{normalized minimum determinant} plays a similar role to the Hermite constant in Gaussian channels. In particular it can be used to measure how close to capacity a given family of lattice codes can get. 

Based on this design criterion we prove that for a MIMO channel with $n$ transmit and $n_r$ receive antennas, there exists a family of multiblock lattice codes $L_{n,k}\subset M_{n\times nk}(\C)$ (where $k$ goes to infinity) that achieves a constant gap to capacity both in the slow fading and ergodic fading case.  
More precisely, for a MIMO channel with ergodic capacity $C=\mathbb{E}_{H}\left[\log \det (I_{n_r} + \frac{\SNR}{n} H^{\dagger}H)\right]$, our scheme achieves any rate 
\begin{equation}\label{main}
R< \mathbb{E}_{H}\left[\log \det \frac{\SNR}{n} H^{\dagger}H\right]-n\log C_L +n\log\frac{\pi e}{4n},
\end{equation}
where $C_L$ is a certain geometric invariant of the family of lattices. These rates are achieved not only with maximum likelihood (ML) decoding, but also with naive lattice decoding. \\
Furthermore, the same scheme achieves positive rates of reliable communication for more general fading processes $\{H_i\}$ under the mild hypothesis that the weak law of large numbers holds for the sequence of random variables $\{\log \det H_i^{\dagger}H_i\}$. \\
As far as we know, this is the first  explicit coding scheme which achieves 
constant gap to capacity for all SNR levels in  MIMO channels.  
	
Instead of using random coding arguments we 
consider 
algebraic multi-block division algebra codes \cite{YB07, Lu}.
Our lattice constructions are based on two results from classical class field theory. First  we choose the center $K$ of the algebra from an ensemble of Hilbert class fields having small root discriminant  and then we prove the existence of a $K$-central division algebra with small discriminant. Our lattices   belong to a very general family of division algebra codes introduced in \cite{YB07, Lu, EK}, and developed further in  \cite{HL} and \cite{VHO}. We will use the most general form presented in \cite{LSV}.
In particular our work proves that the classical number field codes \cite{OV} achieve a constant gap to capacity in Rayleigh fast fading single antenna channels.

In most works on algebraic space-time coding the code design criterion is derived from an upper bound for the pairwise error probability \cite{TSC} together with the union bound \cite{OV,OBV}.
 In our proofs we abandon this method and consider a hard sphere packing approach, classically used in lattice coding for the AWGN channel.  The idea, formalized in  Section \ref{criterion}, is to exploit the special \emph{multiplicative structure} of algebraic codes.   It was observed in the context of diversity-multiplexing gain trade-off (DMT) analysis \cite{EKPKL, TV}, that fading has a diminishing effect on the euclidean distance of the received code constellations derived from division algebra codes, having the so-called \emph{non-vanishing determinant} property, only if the channel itself is bad. This property was formalized in \cite{TV}, where the authors introduced \emph{approximately universal codes} for fading channels.
Our main results Theorem  \ref{prop_positive_rate} and Theorem \ref{prop_positive_rate2} rely on this ``incompressibility'' property of algebraic lattices. It follows that our codes are  \emph{almost universal} and perform within a constant gap to capacity for a wide  class of channels, having only mild restrictions on fading.

While we discuss specific lattice codes from division algebras, our proofs do work for any ensemble of  matrix lattices with asymptotically good normalized minimum determinant.  The larger this value is, the smaller the gap to the capacity.

This work  also suggests  that capacity questions in fading channels are naturally linked to problems in the mathematical research area of \emph{geometry of numbers}. 
Unlike the single antenna Gaussian case, many of the questions that arise have not been actively studied by the mathematical community. \\
Hopefully, studying such questions may lead to a comprehensive geometric theory of lattices for multiple antenna fading channels.


We note that the proposed lattice code constructions are not yet practical, since they are based on number fields whose existence is proved through class field theory. Given a fixed degree, the required number fields can be found using computational algebra software, but this process is computationally taxing. Decoding of the proposed codes is also very complex and the constructions we provide still have a large gap to capacity.

On the other side, as demonstrated in Section \ref{numberfields}, the existence results we use are very pessimistic. For small 
degrees the normalized minimum determinants of the best possible lattices are considerably better than the bounds provided by our existence results.


\subsection{Related work}\label{related}
  While our work shows that one can achieve a constant gap to capacity in ergodic MIMO channels with a fixed family of algebraic codes, 
it is natural to consider the 
more general question of whether it is possible to achieve 
capacity with 
any 
lattice codes. Such a result would be 
a 
generalization 
of the work in \cite{DeBuda, Urbanke_Rimoldi, Loeliger} which proved the existence of random lattice code ensembles 
achieving rate $\log(\mathrm{SNR})$ over the AWGN channel. By making the extra assumption that the transmitter and receiver have access to a common source of randomness in the form of a dither, the authors in \cite{Erez_Zamir} finally proved that the AWGN capacity is achievable with random lattice codes. An explicit multilevel construction from polar codes was recently proposed in \cite{Yan_Ling_Wu}. 

As far as we know our work \cite{ISIT2015_MIMO} was the first to give a proof that lattice codes achieve a constant gap to capacity in block fading MIMO channels. 
In the single antenna fast fading channel this problem was considered before 
in \cite{HNGLOBECOM}, 
which claims 
that random lattices achieve a constant gap to capacity. In \cite{HNISIT} the authors extend their previous results and claim to give a proof that random lattices achieve capacity in single antenna ergodic fading channels. However, we believe that at least in its current form, the analysis in both works is missing some fundamental details. In particular,  
the gap 
$
\Delta <1 + \log\mathbb{E}_h\left[\frac{1}{|h|^2}\right],$
given in \cite[Theorem 3]{HNGLOBECOM}, is infinite even 
when the fading process $\{h_i\}$ is i.i.d. complex Gaussian.  
In \cite[Equation (20)]{HNISIT} the authors state that for a given fixed fading realization, the Minkowski-Hlawka theorem implies that there exist a lattice for which the error probability is upper bounded in a certain way. 
However, they 
proceed as if there existed a single lattice that would satisfy 
this upper bound for any channel state. To the best of our knowledge, such a result can not be derived from Minkowski-Hlawka. 
\smallskip \par
As far as explicit algebraic constructions are concerned, our work is indebted to several previous papers.\\ 
The idea to use division algebra codes to achieve capacity can be tracked down to the work of   H.-f. Lu  in \cite{Lu}.
While studying the diversity-multiplexing gain tradeoff (DMT) of  multiblock codes he   conjectured that the ensemble of  multi-block division algebra codes might approach the ergodic Rayleigh fading capacity. Our work confirms that conjecture; however, we point out that it is unlikely that DMT-optimality alone is enough to approach capacity. Instead one should pick the code very carefully by maximizing the normalized minimum determinant. 

The families of number fields  on which our constructions are based 
were first brought to coding theory in \cite{LT}, where the authors pointed out that the corresponding lattices  have large Hermite constant. 
C. Xing in \cite{Xing} remarked that   these families of number fields provide the best known normalized product distance making them a natural candidate for achieving constant gap to capacity in fading single antenna channels.

Our results  on slow fading channel are motivated by the work in \cite{OE}, where the authors prove that \emph{precoded integer forcing} achieves a constant gap to capacity for every slow fading channel with fixed fading. 

Our  geometry of   numbers approach has its roots in   \cite{GB}, where the authors studied lattice codes in single antenna fading channel and defined the normalized product distance. They also pointed out that using this criterion reduces the lattice design to a problem in geometry of numbers.

 The generalization of these ideas to the MIMO channel was developed in  \cite{LV} and \cite{V}, where the code design for quasi-static MIMO channel problem was translated into lattice theoretic language and where a formal definition of normalized minimum determinant was given. However, none of these works considered the relation between geometry of numbers and capacity problems.


\subsection{Organization of the paper}
In Section \ref{preliminaries} we introduce the multiblock channel model and recall the relevant properties of lattice codes. In Section \ref{reher} we develop a geometric design criterion for capacity approaching lattice codes for fading multiple antenna channels and define the concept of \emph{reduced Hermite invariant} which is an analogue of the classical Hermite invariant. In Section \ref{statement} we state the existence of lattices having asymptotically good normalized minimum determinant (the proof will be given in Section \ref{construction}).    
 In Section \ref{general_channels_section} we prove that the lattice codes of the previous section achieve positive rates over a very general class of channels. We then prove that they achieve a constant gap to capacity over slow fading (Section \ref{slow_fading_section}) and ergodic fading channels (Section \ref{ergodic_section}). In Section \ref{Rayleigh_fading} we focus on the i.i.d. Rayleigh fading channel model, and show that the error probability vanishes exponentially. In Section \ref{construction} we prove the existence of asymptotically good lattices, and in Section \ref{numberfields} we specialize our results to the single antenna case.  Finally in Section \ref{fading_geom} we explore the connection between capacity questions in fading channels and  geometry of numbers. Section \ref{conclusion} discusses some perspectives and open problems. 

\subsection{Notation}
Throughout the paper, capacity is measured in bits. Accordingly, we denote by $\log$ the base $2$ logarithm in rate and capacity expressions; the natural logarithm will be denoted by $\ln$.

\section{Multiblock lattice codes} \label{preliminaries}

\subsection{Channel model} \label{channel_model}
We consider a MIMO system with $n$ transmit and $n_r$ receive antennas, where transmission takes place over $k$ quasi-static fading blocks of  delay $T=n$. 
Each multi-block codeword $X\in M_{n\times nk}(\C)$ has the form $(X_1,X_2,\dots, X_k)$, where the submatrix $X_i \in M_n(\C)$ is sent during the $i$-th block. The received signals are given by
\begin{equation}\label{eq:channel}
Y_i=H_i X_i +W_i, \quad \quad i \in \{1,\ldots,k\}
\end{equation}
where $H_i \in M_{n_r \times n}(\C)$ and $W_i\in M_{n_r \times T}(\C)$ are the channel and noise matrices. The coefficients of $W_i$ are modeled as circular symmetric complex Gaussian with zero mean and unit variance per complex dimension.  Perfect channel state information is available at the receiver but not at the transmitter, and decoding is performed after all $k$ blocks have been received. We will call such a channel an \emph{$(n, n_r, k)$-multiblock channel}.\\
In this paper, we will also assume that for all $i \geq 1$, $H_i \in M_{n_r \times n}$ is full-rank with probability $1$, and that the random variable $\sum_{i=1}^k \frac{1}{n} \log \det (H_i^{\dagger}H_i)$ converges in probability to some constant when the number of blocks $k$ tends to infinity.  This channel model covers several standard MIMO channels such as the Rayleigh block fading channel and the slow fading channel.

A \emph{multi-block code}  $\mathcal{C}$ in a $(n, n_r, k)$-channel is  a set of matrices  in $M_{n\times nk}(\C)$.  In particular we will concentrate on finite codes that are drawn from lattices. Let $R$ denote the code rate in bits per complex channel use; equivalently, $\abs{\mathcal{C}}=2^{Rkn}$. 
We assume that every matrix $X$ in a finite code $\mathcal{C}\subset M_{n\times nk}(\C)$ satisfies the average power constraint
\begin{equation} \label{power_constraint}
\frac{1}{nk} \norm{X}^2 \leq P,
\end{equation}
where $\norm{X}$ is the Frobenius norm of the matrix $X$.


\subsection{Lattice codes} \label{lattice_basics}

\begin{definition}
A {\em matrix lattice} $L \subseteq M_{n\times nk}(\C)$ has the form
$$
L=\Z B_1\oplus \Z B_2\oplus \cdots \oplus \Z B_r,
$$
where the matrices $B_1,\dots, B_r$ are linearly independent over $\R$, i.e., form a lattice basis, and $r$ is
called the \emph{rank}  or the \emph{dimension} of the lattice.
\end{definition}

The space $M_{n\times nk}(\C)$ is a $2n^2k$-dimensional real vector space with a real inner product
$$
\langle X,Y\rangle=\Re(Tr (XY^{\dagger})),
$$
where $Tr$ is the matrix trace. This inner product also naturally defines a metric on the space $M_{n\times nk}(\C)$ by setting $||X||= \sqrt{\langle X,X\rangle}$. 
 

Given an  $m$ dimensional lattice $L\subset M_{n\times nk}(\C)$, its 
{\it Gram matrix} 
is defined as 
$$G(L)=\left( \langle X_i,X_j\rangle\right)_{1\le i,j\le m},$$
where $\{ X_i \}_{1 \leq i \leq m}$ is a basis of $L$.  The volume of the fundamental parallelotope of $L$ is then defined as 
$\Vol(L)=\sqrt{|\det(G(L))|}$.

In the following we will use the notation $\R(L)$ for the linear space generated by the basis elements of the lattice $L$.
\begin{lem}\emph{\cite{GL}}\label{shift}
Let us suppose that $L$ is a lattice in  $M_{n\times kn}(\C)$  and  $S$ is a Jordan measurable bounded subset of $\R(L)$.  Then there exists $X \in M_{n\times kn}(\C)$ such that
$$
|(L+X)\cap S|\geq\frac{\mathrm{Vol}(S)}{\mathrm{Vol}(L)}.
$$
\end{lem}

Given a family of lattices $L_{n,k}\subseteq M_{n\times nk}(\C)$, let us now show how we can design multiblock codes $\mathcal{C}$ having rate greater or equal to a prescribed constant $R$, and satisfying the average power constraint (\ref{power_constraint}), from a scaled version $\alpha L_{n,k}$ of the lattices, where $\alpha$ is a suitable energy normalization constant. We denote by $B(r)$ the set of matrices in $M_{n \times nk}(\C)$ with Frobenius norm smaller or equal to $r$. According to Lemma \ref{shift}, we can choose a constant shift $X_R\in M_{n\times nk}(\C)$ such that for $\mathcal{C}=B(\sqrt{Pkn} )\cap(X_R+\alpha L_{n,k})$ we have
$$2^{Rnk}= \abs{\mathcal{C}}\geq\frac{\Vol(B(\sqrt{Pkn} ))}{\Vol(\alpha L_{n,k})}=\frac{C_{n,k}P^{n^2k}}{\alpha^{2n^2k} \Vol(L_{n,k})},$$
where $C_{n,k}=\frac{(\pi n k)^{n^2 k}}{(n^2 k)!}$. We then find the following condition for the scaling constant:
\begin{equation} \label{alpha_multiblock}
\alpha^2=\frac{C_{n,k}^{\frac{1}{n^2k}} P}{2^{\frac{R}{n}} \Vol(L_{n,k})^{\frac{1}{n^2 k}}} 
\end{equation}

\section{Design criteria for fading channels}\label{criterion}

In this section we 
propose a new design criterion for capacity approaching lattice codes in fading channels. For the sake of simplicity, we will focus on the case $n_r \geq n$.
We note that 
the design criterion derived here will finally be the familiar minimum determinant criterion. However, we hope that our alternative characterization offers more insight on the topic and can have applications in further research.

\subsection{Reduced Hermite invariant}\label{reher}

We recall the classical definition of the Hermite invariant, which characterizes the density of a lattice packing:
\begin{definition}\label{def:hermite}
The Hermite invariant of an $m$-dimensional  lattice $L\subset M_{n\times nk}(\C)$  can be defined as 
$$
\mathrm{h}(L)=\frac{\mathrm{inf}\{\,||X||^2 \mid  X\in L, X\neq 0\}}{\Vol(L)^{2/m}}.
$$
\end{definition}
On the $n \times n$ MIMO Gaussian channel such that the channel matrices $H_i=I_n$ $\forall i$, the classical sphere packing approach 
is to choose a $2n^2k$-dimensional lattice code $L_{n, k}\subset M_{n\times nk}(\C)$ such that $\mathrm{h}(L_{n,k})$ is as large as possible. 

Let us now assume that we have a finite code $\mathcal{C}_L \subset L_{n,k}\subset M_{n\times nk}(\C)$ and a random channel realization $H=[H_1,\dots, H_k]$. 
We note that the channel output $Y=[Y_1,\ldots,Y_k]$ can be written as 
$$Y=H_d X + W,$$
where $X=[X_1,\ldots,X_k]$, $W=[W_1,\ldots,W_k]$, and $H_d=\diag(H_1,\ldots,H_k)$. From the receiver's point of view, this is equivalent to an additive white Gaussian noise channel where the lattice code is
$$
H\mathcal{C}_L=\{H_dX\mid X\in \mathcal{C}_L\}.
$$
Even if the lattice $L_{n,k}$ (and therefore the code $\mathcal{C}_L$) has good minimum distance, there is no guarantee that the same can be said about the lattice $H\mathcal{C}_L$. 
This leads us to consider matrix lattices $L_{n,k}\subset M_{n\times nk}(\C)$ which would have good minimum distance after any (reasonable) channel. If we assume that each of the matrices $H_i$ in equation \eqref{eq:channel} has full rank with probability $1$, then the multiplication $X \mapsto H_d X$ is a bijective linear mapping with probability $1$. For any lattice $L_{n,k}\subset M_{n\times nk}(\C)$ having basis $B_1,\dots ,B_{2n^2k}$ we then have that
$$
HL_{n,k}=\{H_dX\mid X\in L_{n,k}\}=\Z H_dB_1 \oplus \cdots \oplus \Z H_dB_{2n^2k},
$$
is a lattice with basis $H_dB_1,\cdots, H_dB_{2n^2k}$, and
$\mathrm{h}(HL_{n,k})$ is well defined.

 As a discrete group, $HL_{n,k}$  has positive Hermite invariant, but even if $\mathrm{h}(L_{n,k}) $ is large there is no guarantee that $\mathrm{h}(HL_{n,k})$ is. 

For convenience we first introduce a group of matrices
\begin{equation} \label{G_definition}
G= \{ H\in M_{n\times nk }(\C  )\mid\mathrm{pdet}(H)=1\},
\end{equation}
where $\mathrm{pdet}(H)=\prod_{i=1}^{k} \mathrm{det}(H_i)$.

\begin{definition}\label{reduced}
The  \emph{reduced Hermite invariant} of an $m$-dimensional  lattice $L\subset M_{n\times nk}(\C)$ with respect to the group $G$  is defined as 
$$
\mathrm{rh_G}(L)=\underset{H\in G}{\mathrm{inf}} \{\mathrm{h}(HL)\}.
$$
\end{definition}

For any lattice $L$, $\mathrm{h}(L)>0$. The same is not true for the reduced Hermite invariant.
Let us now describe the set of lattices $L$ for which $\mathrm{rh_G}(L)>0$.

\begin{definition}\label{mindet}
The \emph{minimum determinant} of the lattice $L \subseteq M_{n\times nk}(\C)$ is defined as
\[
\mindet{L}:=\inf_{X \in L \setminus \{\bf 0\}} \abs{\mathrm{pdet}(X)}.
\]
If $\mindet{L}>0$ we say that the lattice satisfies the \emph{non-vanishing determinant} (NVD) property.
\end{definition}

We can now define the {\em normalized minimum determinant} $\delta(L)$, which  is obtained  by first scaling the lattice $L$ to have a unit size fundamental parallelotope and then taking the minimum determinant of the resulting scaled lattice. A simple computation proves the following.
\begin{lemma}\label{scale}
Let $L$ be an $m$-dimensional matrix lattice
in $M_{n\times nk}(\C)$. We then have that
\begin{equation} \label{normalized_minimum_determinant}
\delta(L) =\frac{\mindet{L}}{(\Vol(L))^{nk/m}}.
\end{equation}
\end{lemma}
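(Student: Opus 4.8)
The plan is to track how the numerator $\mindet{L}$ and the denominator $\Vol(L)$ respond to a uniform scaling $X \mapsto cX$ of the lattice, and then to substitute the unique positive constant $c$ that normalizes the fundamental parallelotope to unit volume, since by definition $\delta(L) = \mindet{cL}$ for precisely that $c$. Concretely, I would write $L = \Z B_1 \oplus \cdots \oplus \Z B_m$ and work with the scaled lattice $cL = \Z(cB_1) \oplus \cdots \oplus \Z(cB_m)$ for a real $c>0$, computing each of the two homogeneity degrees separately before combining them.

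For the volume, bilinearity of the inner product $\langle X,Y\rangle = \Re(\tr(XY^{\dagger}))$ gives $G(cL) = c^2 G(L)$, hence $\det G(cL) = c^{2m}\det G(L)$ and $\Vol(cL) = c^m \Vol(L)$. For the minimum determinant, a codeword $X = (X_1,\ldots,X_k)$ with each $X_i \in M_n(\C)$ satisfies $\det(cX_i) = c^n \det(X_i)$, so that $\mathrm{pdet}(cX) = \prod_{i=1}^k \det(cX_i) = c^{nk}\,\mathrm{pdet}(X)$; taking the infimum over $X \in L \setminus \{\mathbf 0\}$, which is the same as over $cX \in cL \setminus \{\mathbf 0\}$, yields $\mindet{cL} = c^{nk}\mindet{L}$. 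Choosing $c = \Vol(L)^{-1/m}$ forces $\Vol(cL) = 1$, and substituting gives $\delta(L) = \mindet{cL} = c^{nk}\mindet{L} = \Vol(L)^{-nk/m}\mindet{L}$, which is the claimed identity.

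There is no real obstacle here, as the lemma is a bookkeeping computation; the only point demanding care—and the sole reason the exponent $nk/m$ (rather than, say, $1/m$) appears in the denominator—is the homogeneity degree of $\mathrm{pdet}$. Each of the $k$ blocks is an $n \times n$ matrix contributing a factor $c^n$ under scaling, so $\mathrm{pdet}$ is homogeneous of degree $nk$, whereas the $m$-dimensional volume is homogeneous of degree $m$; the exponent in \eqref{normalized_minimum_determinant} is exactly the ratio $nk/m$ of these two degrees, and I would make sure to display this matching of degrees explicitly rather than fold it into an unexplained step.
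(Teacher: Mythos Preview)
Your proposal is correct and is exactly the simple computation the paper alludes to but omits: the text defines $\delta(L)$ as the minimum determinant of the rescaled lattice of unit volume and then states that ``a simple computation proves'' the formula, without spelling it out. Your tracking of the homogeneity degrees --- $\Vol(cL)=c^m\Vol(L)$ and $\mindet{cL}=c^{nk}\mindet{L}$ --- followed by the substitution $c=\Vol(L)^{-1/m}$ is precisely that computation.
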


The normalized minimum determinant provides an alternative characterization of the reduced Hermite invariant, but before that we need a well known lemma.
\begin{lemma} \label{hadamard}
Let $A$ be an $m\times m$ complex matrix. We  have the inequality
$$
\vert\det (A)\vert\le\frac{\Vert A \Vert ^m}{m^{m/2}}.
$$
\end{lemma}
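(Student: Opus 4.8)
The plan is to deduce this from the classical column-form Hadamard inequality combined with the AM-GM inequality. I would write $A$ in terms of its columns $a_1,\dots,a_m\in\C^m$, so that the Frobenius norm decomposes as $\norm{A}^2=\sum_{i=1}^m\norm{a_i}^2$, where $\norm{a_i}$ denotes the Euclidean norm of the $i$-th column.

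First I would establish (or simply cite) the classical Hadamard bound
$$
\abs{\det(A)}\le\prod_{i=1}^m\norm{a_i}.
$$
The degenerate case of linearly dependent columns is immediate, since then $\det(A)=0$, so one may assume the columns are independent and obtain the bound from the $QR$ factorization $A=QR$ with $Q$ unitary and $R$ upper triangular: here $\abs{\det(A)}=\prod_i\abs{r_{ii}}$, and because Gram--Schmidt orthogonalization only removes components, $\abs{r_{ii}}\le\norm{a_i}$ for each $i$.

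Next I would bound the product of the column norms by their quadratic mean. Applying AM-GM to the nonnegative reals $\norm{a_i}^2$ gives
$$
\prod_{i=1}^m\norm{a_i}^2\le\left(\frac{1}{m}\sum_{i=1}^m\norm{a_i}^2\right)^{\!m}=\left(\frac{\norm{A}^2}{m}\right)^{\!m}.
$$
Taking square roots yields $\prod_i\norm{a_i}\le\norm{A}^m/m^{m/2}$, and chaining this estimate with the Hadamard bound produces exactly the claimed inequality.

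There is no genuine obstacle here: the result is a two-line combination of two standard inequalities. The only point that deserves care is the justification of the column-form Hadamard inequality itself; if a self-contained derivation is preferred over a citation, the cleanest route is the Gram--Schmidt argument sketched above. One may also remark that equality holds precisely when the columns are mutually orthogonal and of equal norm, since this is exactly the simultaneous equality condition for the Hadamard and AM-GM steps.
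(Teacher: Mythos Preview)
Your proof is correct. The paper does not actually prove this lemma; it introduces it as ``a well known lemma'' and uses it without further justification. Your argument via the column-form Hadamard inequality combined with AM-GM on the squared column norms is the standard derivation, and your remark on the equality case is accurate.
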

\vspace* {4pt}
For a matrix $X \in M_{n\times nk}(\C)$ this immediately implies  that 
$$
|\mathrm{pdet}(X)|\le\frac{\Vert X \Vert^{nk}}{(nk)^{nk/2}}.
$$

\begin{proposition}\label{motivation}
If  $L\subset M_{n\times nk}(\C)$ is a $2n^2k$-dimensional lattice, then 
$$
  nk \left( \delta(L)\right)^{2/nk}=\mathrm{rh_G}(L).
$$
\end{proposition}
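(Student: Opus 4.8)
The plan is to isolate the inner optimization $\inf_{H \in G}\norm{H_d X}^2$ for a fixed codeword $X$, and then minimize over the lattice. First I would record that $L$ is full-dimensional, since $\dim_{\R} M_{n\times nk}(\C) = 2n^2 k = m$, so $\R(L)$ is the whole ambient space and the left multiplication $X \mapsto H_d X$ is a genuine linear automorphism for $H \in G$. Computing its real Jacobian block by block, left multiplication by $H_i$ on $M_n(\C)$ has complex determinant $\det(H_i)^n$, hence real determinant $\abs{\det(H_i)}^{2n}$; taking the product over the blocks gives $\abs{\mathrm{pdet}(H)}^{2n} = 1$ for $H \in G$. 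Therefore $\Vol(HL) = \Vol(L)$ for every $H \in G$, the denominator in $\mathrm{h}(HL)$ is constant over $G$, and after interchanging the two infima (always legitimate) I obtain
\[
\mathrm{rh_G}(L) = \frac{1}{\Vol(L)^{2/m}} \inf_{X \in L \setminus \{0\}} \; \inf_{H \in G} \norm{H_d X}^2.
\]

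The heart of the argument is the pointwise identity $\inf_{H \in G}\norm{H_d X}^2 = nk\,\abs{\mathrm{pdet}(X)}^{2/(nk)}$. For the lower bound I would apply the $\mathrm{pdet}$-version of Hadamard's inequality (Lemma \ref{hadamard}) to $H_d X$, together with the multiplicativity $\mathrm{pdet}(H_d X) = \mathrm{pdet}(H)\,\mathrm{pdet}(X) = \mathrm{pdet}(X)$ that holds on $G$; this gives $\norm{H_d X}^2 \geq nk\,\abs{\mathrm{pdet}(X)}^{2/(nk)}$. For attainment when $\mathrm{pdet}(X) \neq 0$ (all blocks invertible), I would choose $H_i = c_i U_i X_i^{-1}$ with $U_i$ unitary, so that each block $H_i X_i = c_i U_i$ is a scalar multiple of a unitary matrix, forcing equality in the Hadamard step; taking all $\abs{c_i}$ equal to $\abs{\mathrm{pdet}(X)}^{1/(nk)}$ equalizes the block norms (equality in the underlying AM--GM step) and leaves free phases, one of which can be tuned to enforce $\mathrm{pdet}(H) = 1$, i.e. $H \in G$.

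The step I expect to be the main obstacle is the degenerate case $\mathrm{pdet}(X) = 0$, where the target value collapses to $0$ and I must exhibit a sequence in $G$ with $\norm{H_d X}^2 \to 0$ despite the rigid constraint $\mathrm{pdet}(H) = 1$. Here I would exploit a singular block: writing its singular value decomposition $X_i = U_i \Sigma_i V_i^{\dagger}$ with a vanishing singular value, I set $H_i = V_i \Lambda_i U_i^{\dagger}$ and route an arbitrarily large factor through the diagonal entry of $\Lambda_i$ aligned with the zero singular direction --- this inflates $\det(H_i)$ without changing $\norm{H_i X_i}$. Scaling every block by a factor $\epsilon$ and absorbing the resulting determinant deficit through that free direction keeps $H \in G$ while driving $\norm{H_d X}^2 \to 0$ as $\epsilon \to 0$, so the infimum is $0$, consistent with the formula. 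Once the inner identity holds in all cases, I would pull the constant $nk$ outside, use monotonicity of $t \mapsto t^{2/(nk)}$ to write $\inf_{X} \abs{\mathrm{pdet}(X)}^{2/(nk)} = (\mindet{L})^{2/(nk)}$, and finally substitute $\delta(L) = \mindet{L}/\Vol(L)^{nk/m}$ to recognize the right-hand side as $nk\,\delta(L)^{2/(nk)}$, completing the proof.
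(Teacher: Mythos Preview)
Your proposal is correct and follows essentially the same approach as the paper: the Hadamard/AM--GM lower bound for $\norm{H_dX}^2$ and the attainment via $H_i \propto X_i^{-1}$ are exactly the ingredients the paper uses. In fact you are more careful than the paper on two points it glosses over --- the volume invariance $\Vol(HL)=\Vol(L)$ for $H\in G$ (which the paper uses implicitly), and the degenerate case $\mathrm{pdet}(X)=0$, where the paper simply asserts ``it is easy to see'' while you supply the SVD argument.
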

\begin{proof}
If the lattice $L$ includes a non-zero element $X$ such that $\mathrm{pdet}(X)= 0$, it is easy to see that $nk \left( \delta(L)\right)^{2/nk}=\mathrm{rh_G}(L)=0$.\\
Let us now assume that $\mathrm{pdet}(X)\neq 0$, for all $X\neq 0$. If $\mathrm{pdet}(H)=1$, Lemma \ref{hadamard} implies that
$$
\norm{H_dX}^2\geq nk\abs{\mathrm{pdet}(H_dX)}^{2/nk}=nk\abs{\mathrm{pdet}(X)}^{2/nk}.
$$
It follows that $nk \left( \delta(L)\right)^{2/nk}\leq \mathrm{rh_G}(L)$. 

Let us now assume that we have a sequence of  codewords $X^{(i)} \in L$ such that
$$
\lim_{i\to \infty}  nk |\mathrm{pdet}(X^{(i)})|^{2/nk}=nk \left( \delta(L)\right)^{2/nk}.
$$
If $X^{(i)}=[X^{(i)}_1,\dots, X^{(i)}_k]$, 
we can choose 
$H^{(i)}=\mathrm{pdet}(X^{(i)})^{1/n} [(X^{(i)}_1)^{-1},\dots, (X^{(i)}_k)^{-1}]$ 
so that $\mathrm{pdet}(H^{(i)})=1$. We then have
\begin{align*}
&||H_d^{(i)}X^{(i)}||^2=nk|\mathrm{pdet}(H_d^{(i)}X^{(i)})|^{2/nk}=\\
&=nk|\mathrm{pdet}(X^{(i)})|^{2/nk} 
\end{align*}
for every $i$ and therefore $$\lim_{i\to \infty } ||H_d^{(i)}X^{(i)}||^2=nk \left( \delta(L)\right)^{2/nk}.$$ It follows that
$nk \left( \delta(L)\right)^{2/nk}= \mathrm{rh_G}(L)$.
\end{proof}

\begin{remark}
Our definition of the reduced Hermite invariant $\mathrm{rh}_G$ depends heavily on the group $G$.  The group chosen in (\ref{G_definition})
can be seen as a block diagonal subgroup of $\mathrm{SL}_{kn}(\C)$. We could also consider a subgroup $G_1\subset G$ and define $\mathrm{rh}_{G_1}$ with respect to this group.
A natural consequence of these definitions is that for  two subgroups $G_1$, $G_2$  of $G$ such that $G_1\subseteq G_2$, we have that
$$
\mathrm{rh}_{G_1}(L)\geq \mathrm{rh}_{G_2}(L).
$$
 \end{remark}

\subsection{Asymptotically good families of lattices}\label{statement}
Based on the observations in the previous section, we can say that a sequence of codes $L_{n,k}$ is 
\emph{asymptotically good for the AWGN channel} if
$\mathrm{h}(L_{n,k})\geq cn^2k$, for some positive fixed constant $c$. Similarly, we can say that a sequence of lattices is \emph{asymptotically good for fading channels} if $\mathrm{rh_G}(L_{n,k})\geq cn^2k$. As seen in Proposition \ref{motivation}, this is equivalent to asking that $\delta(L_{n,k})^{2/nk} \geq cn$.

In order to keep the paper suitable for a larger audience we will postpone the proof of the following existence result to Section \ref{construction}.

\begin{proposition} \label{prop:volume}
Given $n$, there exists a family of $2n^2k$-dimensional lattices $L_{n,k}\subset M_{n\times nk}(\C)$, where $k$ grows to infinity, and a constant $G< 92.4$ such that
\begin{align*}
&\Vol(L_{n,k}) \leq 23^{\frac{kn(n-1)}{10}}\left(\frac{G}{2}\right)^{n^2k} \,\\
& {\det}_{min}(L_{n,k})=1\quad \mathrm{and} \quad \delta(L_{n,k})\geq \frac{1}{23^{\frac{k}{20}(n-1)}(G/2)^{\frac{nk}{2}}}.
\end{align*}
\end{proposition}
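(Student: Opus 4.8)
The three assertions are linked through Lemma~\ref{scale}: with $m=2n^2k$ one has $nk/m=1/(2n)$, so $\delta(L)=\mindet{L}/\Vol(L)^{1/(2n)}$. Consequently, once a lattice $L_{n,k}$ is produced with $\mindet{L_{n,k}}=1$ and $\Vol(L_{n,k})\le 23^{kn(n-1)/10}(G/2)^{n^2k}$, substituting into this identity yields exactly the claimed lower bound $\delta(L_{n,k})\ge 23^{-(k/20)(n-1)}(G/2)^{-nk/2}$. The plan is therefore to construct $L_{n,k}$ as the image of a maximal order under a canonical embedding and to bound these two quantities separately. Concretely, I would let $\mathcal{D}$ be a central division algebra of degree $n$ over a \emph{totally complex} number field $K$ with $[K:\Q]=2k$, let $\Lambda\subset\mathcal{D}$ be a maximal $\mathcal{O}_K$-order, and set $L_{n,k}=\Phi(\Lambda)$, where $\Phi(x)=(\Phi_{v_1}(x),\dots,\Phi_{v_k}(x))$ ranges over the $k$ complex places $v_1,\dots,v_k$ of $K$ and each $\Phi_{v_j}\colon \mathcal{D}\otimes_K K_{v_j}\cong M_n(\C)$ is a local splitting. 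This gives a rank-$2n^2k$ lattice in $M_{n\times nk}(\C)$ with $\det\Phi_{v_j}(x)=v_j(\mathrm{nrd}(x))$, hence $\abs{\mathrm{pdet}(\Phi(x))}=\prod_{j}\abs{v_j(\mathrm{nrd}(x))}=\abs{N_{K/\Q}(\mathrm{nrd}(x))}^{1/2}$.

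The minimum determinant is then immediate from the \emph{non-vanishing determinant} mechanism. For $x\in\Lambda\setminus\{0\}$ the reduced norm $\mathrm{nrd}(x)$ is a \emph{nonzero} element of $\mathcal{O}_K$ (nonzero precisely because $\mathcal{D}$ is a division algebra), so $\abs{N_{K/\Q}(\mathrm{nrd}(x))}$ is a positive rational integer and thus at least $1$; since $x=1$ gives $\mathrm{pdet}(\Phi(1))=1$, we conclude $\mindet{L_{n,k}}=1$, with no scaling required.

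For the volume I would use the standard expression $\Vol(\Phi(\Lambda))=2^{-n^2k}\sqrt{\abs{d(\Lambda/\Z)}}$, the power of $2$ being the normalization constant of the trace form $\Re\tr(XY^\dagger)$ under the $k$ complex embeddings, together with the tower formula for discriminants $\abs{d(\Lambda/\Z)}=\abs{d_K}^{n^2}\,N_{K/\Q}(d(\Lambda/\mathcal{O}_K))$. Bounding the field discriminant by the root discriminant, $\abs{d_K}=(\abs{d_K}^{1/(2k)})^{2k}\le G^{2k}$, contributes $2^{-n^2k}\abs{d_K}^{n^2/2}\le (G/2)^{n^2k}$. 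Choosing $\mathcal{D}$ to ramify at exactly two finite places $\mathcal{P}_1,\mathcal{P}_2$ with local invariants $1/n$ and $-1/n$ (which sum to zero and have common denominator $n$, forcing $\mathcal{D}$ to be division of degree $n$) gives $d(\Lambda/\mathcal{O}_K)=(\mathcal{P}_1\mathcal{P}_2)^{n(n-1)}$, whence $N_{K/\Q}(d(\Lambda/\mathcal{O}_K))^{1/2}=(N_{K/\Q}\mathcal{P}_1\cdot N_{K/\Q}\mathcal{P}_2)^{n(n-1)/2}$; a bound $N_{K/\Q}\mathcal{P}_1\cdot N_{K/\Q}\mathcal{P}_2\le 23^{k/5}$ then produces the remaining factor $23^{kn(n-1)/10}$.

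The two genuinely hard inputs are number-theoretic, and this is where I expect the real work. First, one must exhibit the \emph{centers}: a sequence of totally complex fields $K$ of degree $2k\to\infty$ whose root discriminant is uniformly bounded by some $G<92.4$. I would obtain these from an infinite unramified (Hilbert) class field tower over a well-chosen totally imaginary base field, since every field in such a tower inherits the base root discriminant; Martinet's classical example furnishes a tower with root discriminant $\approx 92.37$, which is the source of the constant $G<92.4$. Second, over each such $K$ one must build a degree-$n$ division algebra of controlled discriminant: existence with prescribed local invariants is guaranteed by the Albert--Brauer--Hasse--Noether theorem, but the delicate point --- and the main obstacle --- is to select, uniformly along the tower as $2k\to\infty$, two ramified primes with $N_{K/\Q}\mathcal{P}_1\cdot N_{K/\Q}\mathcal{P}_2\le 23^{k/5}$; this amounts to controlling the residue degrees of primes above a fixed small rational prime throughout the tower. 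Pinning down the exact power-of-two normalization so that the field-discriminant factor is $(G/2)^{n^2k}$ rather than merely $G^{n^2k}$ is a comparatively routine, but necessary, final check.
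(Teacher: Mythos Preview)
Your proposal is correct and follows essentially the same route as the paper: the lattice is $\varphi(\Lambda)$ for a maximal order $\Lambda$ in a degree-$n$ division algebra over a Martinet field $K$, Proposition~\ref{reg2} supplies $\mindet{\cdot}=1$ and $\Vol=2^{-n^2k}\sqrt{|d(\Lambda/\Z)|}$, Theorem~\ref{theorem_VHLR} gives the algebra with $d(\Lambda/\mathcal{O}_K)=(\mathcal{P}_1\mathcal{P}_2)^{n(n-1)}$, and Theorem~\ref{Martinet_theorem} bounds the root discriminant by $G\approx 92.368$. The one point you flag as the ``main obstacle'' --- bounding $N_{K/\Q}\mathcal{P}_i$ uniformly along the tower --- is handled in the paper by Lemma~\ref{norm}, which exploits the fact that every field in Martinet's tower contains the fixed degree-$20$ subfield $F=\Q(\cos(2\pi/11),\sqrt{2},\sqrt{-23})$ where $23$ already has primes of norm $23$, so transitivity of the norm gives $N_{K_k/\Q}(\mathcal{P}_i)\le 23^{2k/20}=23^{k/10}$.
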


\begin{remark}
In this section we have developed the notion of reduced Hermite invariant for the case $n_r\geq n$. We observe that this notion does not extend in a straightforward way to the case $n_r<n$, because the image $HL$ of an infinite lattice $L$ will no longer be a lattice, and the minimum distance in $HL$ will be zero. However, when considering finite constellations $\mathcal{C}_L$, it is still possible to find suitable lower bounds on the minimum distance of the received constellation $H\mathcal{C}_L$, as will be shown in the following sections.    
\end{remark}

\section{Achievable rates for general channels} \label{general_channels_section}

Suppose that we have an infinite family of lattices  $L_k\in  \C^k$ with Hermite invariants satisfying $\frac{\mathrm{h}(L_k)}{k}\geq c$, for some positive constant $c$. Then a classical result in information theory states that with this family of lattices, all rates satisfying
$$
R < \log P + \log\left(\frac{\pi e}{4}\right)+\log c,
$$
are achievable in the additive complex  Gaussian channel \cite[Chapter 3]{CS}. This means that we can attach a single number $\mathrm{h}(L_k)$ to each lattice $L_k\in \C^k$, which roughly describes its performance and in particular estimates how close to the capacity a family of lattices can get.  
The following theorem can be seen as an analogue of this result for fading channels. 

\begin{theorem} \label{prop_positive_rate}
Suppose that $n_r \geq n$, and let $\{H_i\}_{i \in \Z}$ be a fading process such that $H_i \in M_{n_r \times n}$ is full-rank with probability $1$, and suppose that the weak law of large numbers holds for the random variables $\{\log \det (H_i^{\dagger}H_i)\}$, i.e. $\exists \mu >0$ such that $\forall \epsilon >0$,
\begin{equation} \label{WLLN}
\lim_{k \to \infty} \mathbb{P}\left\{ \abs{\frac{1}{k} \sum_{i=1}^k \log \det(H_i^{\dagger}H_i)-\mu}>\epsilon\right\}=0.
\end{equation}
Let $L_{n,k} \subset M_{n \times nk}(\C)$ be a family of $2n^2k$-dimensional multiblock lattice codes such that 
\begin{equation}\label{twocond}
\mindet{L_{n,k}}=1, \quad \text{and} \quad \Vol(L_{n,k})^{\frac{1}{n^2k}} \leq C_L
\end{equation}
for some constant $C_L > 0$. Then, any rate 
$$R<\mu + n\left(\log P - \log C_L+ \log \frac{\pi e}{4n^2}\right)$$
is achievable using the codes $L_{n,k}$ both with ML decoding and naive lattice decoding.
\end{theorem}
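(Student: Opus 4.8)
The plan is to exploit the \emph{incompressibility} of NVD lattices: since $\mindet{L_{n,k}}=1$, the image $H_dL_{n,k}$ retains a minimum distance that is controlled by the channel determinants alone, not by the detailed geometry of $H$. First I would establish a pointwise distance bound. For $A\in M_{n_r\times n}(\C)$ with $n_r\ge n$, applying Hadamard's inequality (the rectangular analogue of Lemma \ref{hadamard}) to the columns of $A$ together with the AM--GM inequality gives $\norm{A}^2\ge n\,\det(A^\dagger A)^{1/n}$. Applying this blockwise to $A=H_iX_i$, using $\det\big((H_iX_i)^\dagger H_iX_i\big)=\det(H_i^\dagger H_i)\abs{\det X_i}^2$, and then AM--GM across the $k$ blocks yields, for every nonzero $X=[X_1,\dots,X_k]\in L_{n,k}$, the inequality $\norm{H_dX}^2\ge nk\,D_H^{1/(nk)}\abs{\mathrm{pdet}(X)}^{2/(nk)}\ge nk\,D_H^{1/(nk)}$, where $D_H:=\prod_{i=1}^k\det(H_i^\dagger H_i)$ and the last step uses the NVD normalization $\abs{\mathrm{pdet}(X)}\ge 1$. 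This is exactly the rectangular-fading counterpart of the identity in Proposition \ref{motivation}.

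Next I would build the finite code. Scaling $L_{n,k}$ by $\alpha$ from \eqref{alpha_multiblock} and intersecting a shifted copy with the ball $B(\sqrt{Pkn})$ as in Lemma \ref{shift} produces a code $\mathcal{C}_L$ of size $2^{Rnk}$ meeting the power constraint, whose distinct received codewords differ by vectors $H_d\alpha(\ell-\ell')$ with $\ell-\ell'\in L_{n,k}\setminus\{0\}$. Hence the minimum squared distance of $H_d\mathcal{C}_L$ is at least $d^2(H):=\alpha^2nk\,D_H^{1/(nk)}$. Using $\Vol(L_{n,k})^{1/(n^2k)}\le C_L$ together with the Stirling estimate $C_{n,k}^{1/(n^2k)}=\pi nk/\big((n^2k)!\big)^{1/(n^2k)}\to \pi e/n$, for any $\delta>0$ and all large $k$ one obtains $\alpha^2\ge \tfrac{\pi e}{n}(1-\delta)\,P\,2^{-R/n}/C_L$.

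The decoding analysis is where the two decoders are unified. Let $W'$ be the orthogonal projection of the white Gaussian noise onto the $2n^2k$-dimensional real span $\R(H_dL_{n,k})$; conditionally on $H$ it is a Gaussian vector with variance $1/2$ per real coordinate, so $2\norm{W'}^2\sim\chi^2_{2n^2k}$ with $\mathbb{E}\norm{W'}^2=n^2k$ and, crucially, a law that does not depend on $H$. If $\norm{W'}<d(H)/2$, then the transmitted point is the unique nearest lattice point, so both naive lattice decoding over the infinite lattice and ML decoding over $\mathcal{C}_L$ decode correctly (the noise orthogonal to the span is common to all codewords and affects neither rule, and the code minimum distance dominates the lattice minimum distance). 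Introducing the WLLN event $A_\epsilon=\{\tfrac1k\sum_{i=1}^k\log\det(H_i^\dagger H_i)>\mu-\epsilon\}$, for which $\mathbb{P}(A_\epsilon^c)\to0$ by \eqref{WLLN}, one has $d^2(H)>\alpha^2nk\,2^{(\mu-\epsilon)/n}$ on $A_\epsilon$, whence
\begin{equation*}
P_e\le \mathbb{P}(A_\epsilon^c)+\mathbb{P}\!\left(\norm{W'}^2\ge \tfrac14\alpha^2nk\,2^{(\mu-\epsilon)/n}\right).
\end{equation*}

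Finally, substituting the lower bound on $\alpha^2$, the threshold inside the second probability is at least $\tfrac{\pi e}{4}(1-\delta)k\,P\,2^{(\mu-\epsilon)/n}2^{-R/n}/C_L$, which exceeds $n^2k(1+\gamma)$ for a suitable small $\gamma>0$ precisely when $R<\mu+n\big(\log P-\log C_L+\log\tfrac{\pi e}{4n^2}\big)$ and $\epsilon,\delta$ are chosen small enough; for such $R$ the tail $\mathbb{P}\big(\norm{W'}^2\ge n^2k(1+\gamma)\big)$ decays exponentially in $k$, so both terms vanish and $P_e\to0$. The step deserving the most care is this decoding bound: one must check that the single sufficient condition $\norm{W'}<d(H)/2$ simultaneously controls the error of both decoders over a random, non-square channel, and that the law of $\norm{W'}^2$ is genuinely independent of $H$, so that the channel (handled by the weak law of large numbers) and the noise concentration can be decoupled. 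The remaining quantitative work—the rectangular Hadamard/AM--GM chain and the Stirling asymptotics of $C_{n,k}$—is routine.
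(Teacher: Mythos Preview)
Your proposal is correct and follows essentially the same approach as the paper: an NVD/Hadamard minimum-distance bound for the received lattice, a sphere-packing split into a chi-square noise tail plus a channel-determinant event controlled by the weak law of large numbers, and Stirling asymptotics for $C_{n,k}$ to extract the rate threshold. The only cosmetic difference is that for $n_r>n$ the paper reduces to an $n\times n$ system via the thin QR decomposition $H_i=Q_i'R_i'$ (so the effective noise is $(Q_i')^\dagger W_i$), whereas you project the noise orthogonally onto the $2n^2k$-dimensional real span of $H_dL_{n,k}$; these are equivalent, and your observation that the conditional law of $\norm{W'}^2$ is $\chi^2_{2n^2k}/2$ independently of $H$ is exactly what makes the decoupling work.
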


\begin{remark}
We note that existence of a family of lattices with 
$$
C_L \leq 23^{\frac{(n-1)}{10n}}\left(\frac{G}{2}\right),
$$
was given in Proposition \ref{prop:volume}.

\end{remark}

\begin{remark}\label{detform}
This theorem  is stated by giving two conditions   \eqref{twocond} for  the  lattices $L_{n,k}$. However, according to Lemma \ref{scale} we could have captured both of these conditions by an equivalent assumption $\delta(L_{n,k})^{2/nk}\geq \frac{1}{C_L}$. Proposition \ref{motivation} then transforms this condition to
$$
\mathrm{rh_G}(L_{n,k})= nk \left( \delta(L_{n,k})\right)^{2/nk}\geq \frac{nk}{C_L}.
$$
As only $k$ is growing, we can further write that  $\mathrm{rh_G}(L_{n,k})\geq n^2k C_L'$, where 
$C_L'=n/C_L$. We can therefore see that conditions \eqref{twocond} assure that the family of lattices $L_{n,k}$ is asymptotically good in the sense of Section \ref{statement}.

The achievable rate $R$ in  Theorem  \ref{prop_positive_rate} can then be seen as a complete analogue to the classical sphere packing result in AWGN channels.
\end{remark}

\begin{remark}
The condition (\ref{WLLN}) holds in particular for ergodic stationary fading channels and for constant MIMO channels. These special cases will be analyzed further in Section \ref{constant_gap_section}, where we will show that the codes in Theorem \ref{prop_positive_rate} achieve a constant gap to channel capacity. 
\end{remark}

To prove Theorem \ref{prop_positive_rate}, we need the following Lemma:

\begin{lemma} \label{sphere_bound}
Consider the finite code $\mathcal{C}=B(\sqrt{Pkn}) \cap (X_R + \alpha L_{n,k})$ defined in Section \ref{lattice_basics}. 
Suppose that the receiver performs maximum likelihood decoding or \vv{naive} lattice decoding (closest point search in the infinite 
lattice). Then, under the hypotheses of Theorem \ref{prop_positive_rate}, $\forall \epsilon>0$ the error probability is bounded by 
\begin{equation} \label{error_probability_multiblock}
P_e \leq 2 e^{-\frac{kn^2\epsilon^2}{8}} 
+ \mathbb{P}\bigg\{ \frac{\alpha^2}{4n}\prod_{i=1}^k {\det(H_i^{\dagger}H_i)}^{\frac{1}{nk}} <1 + \epsilon\bigg\}  
\end{equation}
\end{lemma}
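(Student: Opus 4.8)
The plan is to translate the decoding problem into AWGN geometry and then control the two sources of error — the additive noise and a bad channel realization — separately. After vectorizing as $Y = H_d X + W$ with $H_d = \diag(H_1,\dots,H_k)$, the received constellation $H_d\mathcal{C}$ is a finite subset of the $2n^2k$-dimensional lattice $H_d(\alpha L_{n,k})$ sitting inside the received space. Since both ML decoding (restricted to $\mathcal{C}$) and naive lattice decoding depend on $Y$ only through its projection onto $\R(H_d L_{n,k})$, I would replace $W$ by its projection $W'$ onto this $2n^2k$-real-dimensional subspace. A standard triangle-inequality argument then shows that a decoding error forces $\norm{W'} \geq d_{\min}/2$, where $d_{\min}$ is the minimum distance of $H_d(\alpha L_{n,k})$: indeed, if some competing point is at least as close to $Y$ as the transmitted one, the two differ by a nonzero vector of norm at most $2\norm{W'}$. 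Hence $P_e \leq \mathbb{P}\{\norm{W'}^2 \geq d_{\min}^2/4\}$, valid for both decoders.

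The heart of the argument is a lower bound on $d_{\min}$ expressing the incompressibility of the algebraic lattice, in the spirit of Proposition \ref{motivation}. For any nonzero $X = (X_1,\dots,X_k) \in \alpha L_{n,k}$ I would write $\norm{H_d X}^2 = \sum_{i=1}^k \norm{H_i X_i}^2$ and, factoring $H_i^{\dagger} H_i = R_i^{\dagger} R_i$ with $R_i \in M_n(\C)$, rewrite each term as $\norm{R_i X_i}^2$ with $R_i X_i$ square. Applying Lemma \ref{hadamard} to $R_i X_i$ gives $\norm{R_i X_i}^2 \geq n\, \det(H_i^{\dagger}H_i)^{1/n} \abs{\det X_i}^{2/n}$, and AM--GM across the $k$ blocks yields
$$\norm{H_d X}^2 \geq kn \Big(\prod_{i=1}^k \det(H_i^{\dagger}H_i)\Big)^{1/(nk)} \abs{\mathrm{pdet}(X)}^{2/(nk)}.$$
Because $\mindet{L_{n,k}} = 1$ and $\mathrm{pdet}$ scales as $\alpha^{nk}$, every nonzero $X \in \alpha L_{n,k}$ satisfies $\abs{\mathrm{pdet}(X)} \geq \alpha^{nk}$, so $d_{\min}^2 \geq kn\alpha^2 \prod_{i=1}^k \det(H_i^{\dagger}H_i)^{1/(nk)}$. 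Writing $A = \frac{\alpha^2}{4n}\prod_{i=1}^k \det(H_i^{\dagger}H_i)^{1/(nk)}$, this reads $d_{\min}^2/4 \geq kn^2 A$.

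To finish I would condition on the channel and split on the event $\{A \geq 1+\epsilon\}$. On its complement I bound the conditional error probability by $1$, contributing $\mathbb{P}\{A < 1+\epsilon\}$, which is exactly the second term of \eqref{error_probability_multiblock}. On $\{A \geq 1+\epsilon\}$ we have $d_{\min}^2/4 \geq kn^2(1+\epsilon)$, so using independence of $W$ and $H$ the conditional error probability is at most $\mathbb{P}\{\norm{W'}^2 \geq kn^2(1+\epsilon)\}$. Now $\norm{W'}^2$ is one half of a chi-square variable with $2n^2k$ degrees of freedom, of mean $n^2k$, and a standard Chernoff/concentration estimate gives $\mathbb{P}\{\norm{W'}^2 \geq kn^2(1+\epsilon)\} \leq 2e^{-kn^2\epsilon^2/8}$. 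Adding the two contributions yields \eqref{error_probability_multiblock}.

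The main obstacle is the minimum-distance bound of the second paragraph: it is what converts the NVD property into a guarantee that the received lattice stays well separated unless the channel determinant product is small, and it is the place where the non-square case $n_r > n$ must be handled, via the factor $R_i$. A secondary subtlety worth stressing is that one must work with the \emph{projected} noise $W'$, of real dimension $2n^2k$ and mean $n^2k$, rather than with $\norm{W}^2$, whose mean $n_r n k$ would be too large; it is precisely this projection that makes the threshold $kn^2$ match the concentration exponent $kn^2\epsilon^2/8$.
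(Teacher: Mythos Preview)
Your proof is correct and follows essentially the same route as the paper: split the error into a noise-tail event and a bad-channel event, control the former by chi-square concentration, and control the latter by lower-bounding the received minimum distance via Lemma~\ref{hadamard} applied to the square matrices $R_iX_i$ together with the NVD property. The only difference is presentational: you handle $n_r=n$ and $n_r>n$ in one stroke by projecting the noise onto the $2n^2k$-dimensional range of $H_d$, whereas the paper treats the two cases separately and implements the same projection in the asymmetric case via an explicit QR decomposition $H_i=Q_i'R_i'$ (your Cholesky factor $R_i$ plays exactly the role of their $R_i'$).
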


\begin{IEEEproof}
We distinguish two cases: the symmetric case where $n_r=n$, and the asymmetric case with $n_r>n$.
Let $d_H$ denote the minimum Euclidean distance in the received constellation:
$$d_{H}^2=\min_{\substack{X, \bar{X} \in \mathcal{C}\\ X \neq \bar{X}}} \sum_{i=1}^k \norm{H_i (X_i-\bar{X}_i)}^2.$$
\paragraph{Case $n_r=n$} 
We have 
$$P_e \leq \mathbb{P}\left\{ \norm{W}^2 \geq \left(\frac{d_{H}}{2}\right)^2\right\},$$ 
where $W=(W_1,W_2,\ldots,W_k)$ is the multiblock noise. By the law of total probability, $\forall \epsilon>0$ we have 
\begin{equation} \label{error_probability_multiblock2}
P_e \leq \mathbb{P}\left\{ \frac{\norm{W}^2}{kn^2} \geq 1+\epsilon\right\}  + \mathbb{P} \left\{ \frac{d_H^2}{4kn^2} < 1+\epsilon\right\}.
\end{equation}
Note that $2\norm{W}^2 \sim \chi^2(2kn^2)$, and the tail of the chi-square distribution is bounded as follows for $\epsilon \in (0,1)$ \cite{Laurent_Massart}:
\begin{equation} \label{chi_square_bound}
\mathbb{P}\left\{\frac{\norm{W}^2}{kn^2} \geq 1 + \epsilon \right\} \leq 2 e^{-\frac{kn^2\epsilon^2}{8}}.
\end{equation}  
Thus, the first term in equation \eqref{error_probability_multiblock2} vanishes exponentially fast as $k \to \infty$. \\
In order to provide an upper bound for the second term, we consider a lower bound for the minimum distance in the received constellation. We have 
{\allowdisplaybreaks
\begin{align*}
&d_{H}^2 \geq \alpha^2nk \min _{X \in L_{n,k} \setminus \{0\}} \prod_{i=1}^k \abs{\det(H_iX_i)}^{\frac{2}{nk}} \geq \\
& 
\hspace{-1ex}\geq   \alpha^2 nk\prod_{i=1}^k \abs{\det(H_i)}^{\frac{2}{nk}},
\end{align*}
}%
where the  first bound comes from Lemma \ref{hadamard} and the second from
the hypothesis that $\mindet{L_{n,k}}=1$. Therefore, the second term in (\ref{error_probability_multiblock2}) is upper bounded by  
\begin{equation} 
\mathbb{P}\bigg\{ \frac{\alpha^2}{4n}\prod_{i=1}^k \abs{\det(H_i)}^{\frac{2}{nk}} <1 + \epsilon\bigg\}.  
\end{equation}
\paragraph{Case $n_r>n$}  
In this case, the lattice $H_d L_{n,k}$ is $2n^2k$-dimensional but is contained in a $2n_rnk$-dimensional space. For all $i=1, \ldots, k$, consider the QR decomposition 
$$H_i=Q_iR_i, \quad Q_i \in M_{n_r \times n_r(\C)}, \;\;R_i \in  M_{n_r \times n}(\C),$$
where $Q_i$ is unitary and $R_i$ is upper triangular. 
We have $Q_i=[ Q_i'\; Q_i'']$, where $Q_i' \in M_{n_r \times n}(\C)$ is such that $(Q_i')^{\dagger}Q_i'=I_n$, and $R_i=\left[\begin{array}{c} R_i' \\ 0 \end{array} \right]$, with $R_i' \in M_n(\C)$ upper triangular. Note that the \vv{thin} QR decomposition $H_i=Q_i'R_i'$ also holds.\\ 
Multiplying the channel equation (\ref{eq:channel}) by $Q_i^{\dagger}$, we obtain the equivalent system 
$$\tilde{Y}_i=Q_i^{\dagger} Y_i=R_i X_i+ Q_i^{\dagger}W_i$$
for all $i=1, \ldots, k$. Note that 
$$\tilde{Y}_i=\left[\begin{array}{c} Y_i' \\ Y_i'' \end{array} \right]=\left[\begin{array}{c} R_i'X_i + Q_i' W_i \\ Q_i''W_i \end{array}\right].$$
Thus, the second component contains only noise and no information.  The output of the naive lattice decoder can be written as 
\begin{align*}
&\hat{X}=\argmin_{X' \in \alpha L_{n,k}}  \sum_{i=1}^k \norm{Y_i-H_iX_i'}^2=\\
&=\argmin_{X' \in \alpha L_{n,k}}  \sum_{i=1}^k \norm{\tilde{Y}_i-R_iX_i'}^2= \\
&=\argmin_{X' \in \alpha L_{n,k}}  \sum_{i=1}^k \left(\norm{Y_i'-R_i'X_i'}^2 + \norm{Q_i'' W_i}^2\right)=\\
&=\argmin_{X' \in \alpha L_{n,k}}  \sum_{i=1}^k \norm{Y_i'-R_i'X_i'}^2, 
\end{align*}
since the second component does not depend on the lattice point $X'$. Thus, the naive lattice decoder for the original system declares an error if and only if the naive lattice decoder for the $(n,n,k)$ multiblock system with components $$Y_i'=R_i'X_i + Q_i' W_i=R_i'X_i +W_i'$$ does. Note that $W_i'=(Q_i')^{\dagger} W_i$ is an $n \times n$ matrix with i.i.d. Gaussian entries of variance $1$ per complex dimension.

Let $d_{R'}$ be the minimum distance in the $2n^2k$-dimensional constellation generated by $R'=[R_1',\ldots,R_k']$:  
$$d_{R'}=\min_{\substack{X, \bar{X} \in \mathcal{C}\\ X \neq \bar{X}}} \sum_{i=1}^k \norm{R_i' (X_i-\bar{X}_i)}^2.$$

Observe that $\forall i=1,\ldots,k$,
\begin{align*} 
&\norm{H_i (X_i-\bar{X}_i)}^2=\norm{Q_i' R_i' (X_i-\bar{X}_i)}^2=
\norm{R_i' (X_i-\bar{X}_i)}^2
\end{align*} 

Thus, $d_H=d_R$. Moreover, $\det(H_i^{\dagger}H_i)=\det((R_i')^{\dagger}R_i')=\abs{\det(R_i')}^2$. Similarly to the symmetric case, the error probability can be bounded by 
$$P_e \leq \mathbb{P}\left\{ \norm{W'}^2 \geq \left(\frac{d_{R'}}{2}\right)^2\right\},$$
where $W'=(W_1',\ldots,W_k')$. We can write
\begin{align*}
&d_{R'}^2 \geq  \alpha^2 n k \prod_{i=1}^k \abs{\det(R_i')}^{\frac{2}{nk}}= \alpha^2 n k \prod_{i=1}^k \det(H_i^{\dagger} H_i)^{\frac{1}{nk}}.
\end{align*}
The proof then follows exactly the same steps as in the symmetric case.\end{IEEEproof}

\begin{IEEEproof}[Proof of Theorem \ref{prop_positive_rate}]
The second term in (\ref{error_probability_multiblock})  can be rewritten as 
$$\mathbb{P}\left\{ \frac{1}{k} \sum_{i=1}^k \frac{1}{n} \log {\det(H_i^{\dagger}H_i)} < \log \left( \frac{4n(1+\epsilon)}{\alpha^2}\right)\right\},$$
and will vanish as long as
$$ \log \left( \frac{4n(1+\epsilon)}{\alpha^2}\right) < \frac{\mu}{n}. $$
Recalling that the normalization constant $\alpha^2$ in equation (\ref{alpha_multiblock}) satisfies 
$$\alpha^2 \geq\frac{C_{n,k}^{1/n^2k}P}{2^{R/n}C_L}$$
under the hypothesis that $\Vol(L_{n,k})^{\frac{1}{n^2k}} \leq C_L$, a sufficient condition to have vanishing error probability is
\begin{align*}
\frac{R}{n} < \log P + \frac{\mu}{n}- \log(4n(1+\epsilon)) -\log C_L + \frac{1}{n^2 k} \log C_{n,k}
\end{align*}
From Stirling's approximation, for large $k$ we have 
\begin{equation} \label{C_Stirling}
(C_{n,k})^{\frac{1}{n^2k}} \approx \pi e/(n(2\pi n^2 k)^{\frac{1}{2n^2k}}).
\end{equation}

Since $ \frac{1}{2nk} \log 2\pi n^2 k \to 0$ when $k \to \infty$, any rate 
\begin{multline*}
R < \mu + n(\log P - \log(4n(1+\epsilon)) -\log C_L +  \log \pi e - \log n)
\end{multline*}
is achievable. This holds $\forall \epsilon>0$, and concludes the proof. 
\end{IEEEproof}

\begin{remark} \label{one_sided_convergence}
Note that the two-sided convergence in probability in equation (\ref{WLLN}) is actually not required in the proof of Theorem \ref{prop_positive_rate}. The theorem still holds provided that $\forall \epsilon >0$,
\begin{equation} \label{one_sided_LLN}
\lim_{k \to \infty} \mathbb{P}\left\{ \mu - \frac{1}{k} \sum_{i=1}^k \log \det(H_i^{\dagger}H_i)>\epsilon\right\}=0.
\end{equation}
Moreover, if we have exponentially fast convergence in  (\ref{one_sided_LLN}), then the error probability $P_e$ also vanishes exponentially fast when $k \to \infty$. 
\end{remark}

As a final remark, we note that we can prove an analogue of Theorem \ref{prop_positive_rate} also in the case $n_r<n$, although the bound on achievable rates is more involved:

\begin{theorem} \label{prop_positive_rate2}
Suppose that $n_r < n$, and let $\{H_i\}_{i \in \Z}$ be a fading process such that $H_i \in M_{n_r \times n}$ is full-rank with probability $1$. Suppose that the weak law of large numbers
holds for the random variables $\log \det (H_iH_i^{\dagger})$, i.e. $\exists \mu >0$ such that $\forall \epsilon >0$,
\begin{equation} \label{WLLN2}
\lim_{k \to \infty} \mathbb{P}\left\{ \abs{\frac{1}{k} \sum_{i=1}^k \log \det(H_iH_i^{\dagger})-\mu}>\epsilon\right\}=0.
\end{equation}
Let $L_{n,k} \subset M_{n \times nk}(\C)$ be a family of $2n^2k$-dimensional multiblock lattice codes satisfying (\ref{twocond}).  Then, any rate 
$$R< \mu +n_r(\log P - 2) +(n-n_r) \log (n-n_r) +n \log \frac{\pi e}{n^2C_L}$$
is achievable using the codes $L_{n,k}$ both with ML decoding and naive lattice decoding.
\end{theorem}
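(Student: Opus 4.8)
The plan is to mirror the proof of Theorem~\ref{prop_positive_rate}, replacing the injective map $X\mapsto H_dX$ by a rank-deficient projection and carefully accounting for the dimensions the channel destroys. As in Lemma~\ref{sphere_bound}, I would start from the per-block SVD $H_i=U_i[\Sigma_i'\,\vert\,0]V_i^{\dagger}$, with $U_i\in M_{n_r}(\C)$, $V_i\in M_n(\C)$ unitary and $\Sigma_i'\in M_{n_r}(\C)$ diagonal and invertible with probability $1$. Multiplying $Y_i$ by $U_i^{\dagger}$ and setting $\check X_i=V_i^{\dagger}X_i$, the receiver sees only the top $n_r$ rows $\check X_i^{(1)}$ of $\check X_i$ through the square invertible channel $\Sigma_i'$, while the bottom $(n-n_r)$ rows $\check X_i^{(2)}$ are annihilated. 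The rotated noise is still i.i.d.\ Gaussian, so $2\norm{W}^2\sim\chi^2(2n_rnk)$ concentrates around $n_rnk$. Exactly as in the symmetric case I would bound $P_e\le\mathbb{P}\{\norm{W}^2\ge (d_H/2)^2\}$ and split it, via the law of total probability and the chi-square tail \eqref{chi_square_bound}, into an exponentially vanishing term plus $\mathbb{P}\{d_H^2<4n_rnk(1+\epsilon)\}$, where $d_H^2=\min_{X\ne\bar X}\sum_i\norm{\Sigma_i'\check Z_i^{(1)}}^2$ and $\check Z_i=V_i^{\dagger}(X_i-\bar X_i)$.

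The heart of the argument, and the main obstacle, is to lower bound $d_H^2$ for a nonzero lattice difference $Z\in\alpha L_{n,k}$. Here the reasoning of Theorem~\ref{prop_positive_rate} breaks down: since $\check Z_i^{(2)}$ is invisible to the channel, $\sum_i\norm{\Sigma_i'\check Z_i^{(1)}}^2$ can no longer be controlled by $\abs{\mathrm{pdet}(Z)}$ alone, and over the infinite lattice this minimum is in fact $0$. The idea is to exploit \emph{incompressibility}: the NVD property forces $\abs{\mathrm{pdet}(Z)}=\prod_i\abs{\det Z_i}\ge\alpha^{nk}$, yet $\det Z_i=\det\check Z_i$ can only be hidden in the unseen rows to the extent that those rows are large, and by the power constraint they are bounded. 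Concretely I would split $\check Z_i$ into its $n_r\times n$ and $(n-n_r)\times n$ row blocks and apply the Fischer determinant inequality $\abs{\det\check Z_i}^2\le\det\bigl(\check Z_i^{(1)}(\check Z_i^{(1)})^{\dagger}\bigr)\det\bigl(\check Z_i^{(2)}(\check Z_i^{(2)})^{\dagger}\bigr)$, bounding the last factor by $\bigl(\norm{\check Z_i^{(2)}}^2/(n-n_r)\bigr)^{n-n_r}$ (AM--GM on eigenvalues, the analogue of Lemma~\ref{hadamard}) and the first through $\norm{\Sigma_i'\check Z_i^{(1)}}^2\ge n_r\,\det(H_iH_i^{\dagger})^{1/n_r}\det\bigl(\check Z_i^{(1)}(\check Z_i^{(1)})^{\dagger}\bigr)^{1/n_r}$. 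This produces a per-block lower bound on $\norm{\Sigma_i'\check Z_i^{(1)}}^2$ in terms of $\det(H_iH_i^{\dagger})^{1/n_r}$, $\abs{\det Z_i}^{2/n_r}$, the gain $(n-n_r)^{(n-n_r)/n_r}$, and a negative power of $\norm{\check Z_i^{(2)}}$.

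I would then combine the $k$ per-block estimates by AM--GM across blocks, insert the NVD bound $\prod_i\abs{\det Z_i}\ge\alpha^{nk}$, and use the power constraint $\sum_i\norm{\check Z_i^{(2)}}^2\le\norm{Z}^2\lesssim Pnk$ together with AM--GM to control the product of the $\norm{\check Z_i^{(2)}}$ terms by its worst, equal-energy configuration. This yields a deterministic bound of the form
\begin{equation*}
d_H^2\ \gtrsim\ k\,n_r\,\alpha^{2n/n_r}\Bigl(\textstyle\prod_{i}\det(H_iH_i^{\dagger})\Bigr)^{1/n_rk}\Bigl(\tfrac{n-n_r}{Pn}\Bigr)^{(n-n_r)/n_r},
\end{equation*}
in which the $(n-n_r)$ lost dimensions per block surface both in the reduced power exponent $\alpha^{2n/n_r}$ and in the gain factor $(n-n_r)^{(n-n_r)/n_r}$.

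Finally I would close exactly as in Theorem~\ref{prop_positive_rate}: the weak law \eqref{WLLN2} gives $\frac1{n_rk}\sum_i\log\det(H_iH_i^{\dagger})\to\mu/n_r$, so the second probability vanishes once $d_H^2>4n_rnk(1+\epsilon)$ holds in the limit; substituting the scaling constant $\alpha^2$ from \eqref{alpha_multiblock} with $C_{n,k}^{1/n^2k}\to\pi e/n$ and $\Vol(L_{n,k})^{1/n^2k}\le C_L$, taking logarithms and solving for $R$ gives the threshold $R<\mu+n_r(\log P-2)+(n-n_r)\log(n-n_r)+n\log\frac{\pi e}{n^2C_L}$. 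Both ML and naive lattice decoding are covered because, as in Lemma~\ref{sphere_bound}, the bound on $P_e$ used only the minimum distance of the received constellation. The one delicate point is the bookkeeping of the factor-$4$ power constant in controlling $\check Z_i^{(2)}$, which is precisely what must be tracked to obtain the constant $-2n_r$ rather than $-2n$ in the achievable rate.
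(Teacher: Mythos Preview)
Your approach is correct and reaches the same lower bound on $d_H^2$ and the same achievable rate as the paper. The architecture is identical---split $P_e$ into a chi-square tail and a minimum-distance event, exploit NVD together with the power constraint on the component the channel annihilates, then invoke the weak law and substitute $\alpha^2$. The one real difference is the inequality used to separate the visible and invisible parts of each block: the paper does not perform an explicit SVD or use Fischer's inequality, but instead invokes the mismatched-eigenvalue trace bound $\norm{H_iX_i}^2\ge\sum_j\lambda_{i,j}l_{i,j}$ of K\"ose--Wesel \cite{Kose_Wesel,EKPKL}, with $\lambda_{i,j}$ and $l_{i,j}$ the eigenvalues of $H_i^{\dagger}H_i$ and $X_iX_i^{\dagger}$ sorted in opposite order, and then applies AM--GM directly to the $n_r$ nonzero products. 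Your Fischer step $\det\bigl(\check Z_i^{(1)}(\check Z_i^{(1)})^{\dagger}\bigr)\ge\abs{\det Z_i}^2\big/\det\bigl(\check Z_i^{(2)}(\check Z_i^{(2)})^{\dagger}\bigr)$ plays exactly the role of the paper's $\prod_{j>n-n_r}l_{i,j}\ge\abs{\det X_i}^2\big/\prod_{j\le n-n_r}l_{i,j}$; your version is more self-contained (no external citation needed), while the paper's is basis-free and works directly with the eigenvalues of $X_iX_i^{\dagger}$ without fixing the channel's right singular vectors $V_i$. One small correction on the bookkeeping: the $-2n_r$ in the stated rate comes from the factor $4$ in the comparison $d_H^2>4n_rnk(1+\epsilon)$ (i.e.\ from $(d_H/2)^2$), not from the power constraint on the difference; carefully tracking $\norm{Z}^2\le4Pnk$ would in fact cost an additional $2(n-n_r)$ bits, and the paper, like your $\lesssim Pnk$, tacitly uses the single-codeword bound for the difference.
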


The proof of Theorem \ref{prop_positive_rate2} can be found in Appendix \ref{proof_positive_rate2}.

\section{Achieving constant gap to capacity for slow fading and ergodic channels}  \label{constant_gap_section} 
\subsection{Slow fading channel} \label{slow_fading_section}
We now consider a slow fading scenario, where $H_i=H$ is constant. When $H$ is known both at the transmitter and receiver, the channel capacity is given by \cite{Tel} 
$$C(P)=\max_{Q_{\mathbf{x}}\geq 0, \tr(Q_{\mathbf{x}})\leq P} \log \det (I_{n_r} + H Q_{\mathbf{x}} H^{\dagger}),$$
where  $Q_{\mathbf{x}}$ is the covariance matrix of the input $\mathbf{x}$ for a single channel use.\\
However, if the channel is known at the receiver but not at the transmitter, the transmitter cannot use optimal power allocation and waterfilling, and can only achieve the \emph{white-input capacity} corresponding to uniform power allocation $ Q_{\mathbf{x}}=\frac{P}{n} I_n$:
$$C_{\text{WI}}= \log \det \left(I_{n_r} + \frac{P}{n} H H^{\dagger}\right)=\log \det \left(I_{n} + \frac{P}{n}  H^{\dagger}H\right).$$  
This is for example the case for an open-loop broadcast channel where the transmitter cannot perform rate adaptation for all the users.\\
Clearly, Theorems \ref{prop_positive_rate} and \ref{prop_positive_rate2} apply to the slow-fading scenario since the law of large numbers holds. Moreover, the convergence of the error probability to zero will be exponential, since the second term in equation \eqref{error_probability_multiblock} is actually zero.  The following corollary then shows that a constant gap to white-input capacity is achievable:
\begin{cor} \label{prop_slow_fading}
Consider a slow fading channel such that $H_i=H$ for all $i \geq 1$, and let $L_{n,k} \subset M_{n \times nk}(\C)$ be a family of $2n^2k$-dimensional multiblock lattice codes such that 
$\mindet{L_{n,k}}=1$ and $\Vol(L_{n,k})^{\frac{1}{n^2k}} \leq C_L$. 
Then, this coding scheme can achieve any rate
$$R< \log \det \frac{P}{n} H^{\dagger}H-n\log C_L + n\log \frac{\pi e}{4n}$$
if $n_r \geq n$, and any rate 
$$R< \log \det \frac{P}{n} HH^{\dagger}- 2n_r-(n-n_r) \log \frac{n}{n-n_r}+ n \log \frac{\pi e}{n C_L} $$
if $n_r<n$.
\end{cor}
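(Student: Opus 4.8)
The plan is to obtain Corollary \ref{prop_slow_fading} as a direct specialization of Theorems \ref{prop_positive_rate} and \ref{prop_positive_rate2} to the degenerate fading process $H_i=H$. First I would observe that when the channel is constant, each random variable $\log\det(H_i^{\dagger}H_i)$ (in the case $n_r\geq n$) is identically equal to the deterministic quantity $\log\det(H^{\dagger}H)$, and likewise $\log\det(H_iH_i^{\dagger})$ is identically $\log\det(HH^{\dagger})$ when $n_r<n$. Consequently the empirical average $\frac{1}{k}\sum_{i=1}^k \log\det(H_i^{\dagger}H_i)$ equals its limit exactly for every $k$, so the weak law of large numbers hypotheses \eqref{WLLN} and \eqref{WLLN2} hold trivially with $\mu=\log\det(H^{\dagger}H)$ and $\mu=\log\det(HH^{\dagger})$ respectively. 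In fact the second term in the error bound \eqref{error_probability_multiblock} is identically zero, so convergence is exponential, as already noted in the paragraph preceding the statement.

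With the hypotheses verified, I would simply invoke the two theorems and then rewrite the resulting achievable rates in terms of the scaled matrix $\frac{P}{n}H^{\dagger}H$ (resp. $\frac{P}{n}HH^{\dagger}$). For $n_r\geq n$, Theorem \ref{prop_positive_rate} yields $R<\mu + n\left(\log P - \log C_L + \log\frac{\pi e}{4n^2}\right)$; substituting $\mu=\log\det(H^{\dagger}H)$ and using the identity $\log\det\frac{P}{n}H^{\dagger}H=\log\det(H^{\dagger}H)+n\log\frac{P}{n}$ (the matrix is $n\times n$) converts the bound into the claimed form, since the residual $-n\log n$ merges with $n\log\frac{\pi e}{4n}$ to give $n\log\frac{\pi e}{4n^2}$. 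For $n_r<n$, Theorem \ref{prop_positive_rate2} gives the more elaborate bound, and I would apply the analogous identity $\log\det\frac{P}{n}HH^{\dagger}=\log\det(HH^{\dagger})+n_r\log\frac{P}{n}$, the key point being that the relevant determinant is now $n_r\times n_r$, so the power-allocation factor $\frac{P}{n}$ appears with exponent $n_r$ rather than $n$.

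The remaining work is the routine algebraic reconciliation of the logarithmic constants: in the $n_r<n$ case one must check that the $\log n$ contributions coming from $n_r\log\frac{P}{n}$, from $n\log\frac{\pi e}{n^2C_L}$, and from rewriting $(n-n_r)\log(n-n_r)$ as $-(n-n_r)\log\frac{n}{n-n_r}$ collapse correctly to match the stated expression. I do not expect any genuine obstacle here, since the corollary is essentially a transcription of the two general theorems specialized to a constant channel; the only point requiring care is keeping track of which determinant (of size $n$ or $n_r$) carries the scaling $\frac{P}{n}$, and ensuring the exponent on $\frac{P}{n}$ matches the number of active eigenmodes in each regime.
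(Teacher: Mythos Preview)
Your proposal is correct and follows exactly the approach the paper takes: it simply observes that for a constant channel the law of large numbers hypotheses of Theorems~\ref{prop_positive_rate} and~\ref{prop_positive_rate2} hold trivially (with the second term of \eqref{error_probability_multiblock} vanishing identically), and then rewrites the resulting rate bounds via $\log\det\frac{P}{n}H^{\dagger}H=\log\det H^{\dagger}H+n\log\frac{P}{n}$ and $\log\det\frac{P}{n}HH^{\dagger}=\log\det HH^{\dagger}+n_r\log\frac{P}{n}$. Your bookkeeping of the $\log n$ terms in the $n_r<n$ case is accurate and in fact more explicit than what the paper spells out.
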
 

\begin{remark} \label{remark_constant_gap}
In the case $n_r \geq n$, let $\lambda_i$, $i=1,\ldots,n$ be the singular values of $H$. Then the channel capacity can be written as  
$C(P)=\sum_{i=1}^n \log \left(1+\frac{P}{n}\lambda_i\right).$
The previous corollary shows that the achievable rate is of the form 
$$R(P)=\max\left(0,\log\det\frac{P}{n} H^{\dagger}H -c\right)$$
for some constant $c>0$. Let $P_{\min}$ be the smallest value of $P$ such that $R(P)>0$ if $P>P_{\min}$. Then, for $P \leq P_{\min}$ we have that $C(P)-R(P)=C(P) \leq C(P_{\min})$, while for $P>P_{\min}$,
$C(P)-R(P)=\sum_{i=1}^n \log \left(1+\frac{n}{P\lambda_i}\right)+c$ which is a strictly decreasing function of $P$ and tends to $c$ when $P \to \infty$. Thus, for all $P>0$ we have that $C(P)-R(P)\leq C(P_{\min})$. This shows that the gap is bounded by a constant, however the value of $P_{\min}$ and of the constant depends on the channel. \\
A similar argument holds for $n_r <n$. 
\end{remark}

\subsection{Stationary ergodic channels} \label{ergodic_section}
We now specialize the results of Section \ref{general_channels_section} to the case where the fading process $\{H_i\}$ is \emph{ergodic} and \emph{stationary}. For the sake of completeness, we review the relevant definitions here.

Let $\mathcal{\I}$ be the set $\Z$ or $\mathbb{N}$, and consider a random process $X^{\mathcal{I}}=\{X_i\}_{i \in \mathcal{I}}$ on a probability space $(\Omega,\mathcal{B},\mathbb{P})$ where each random variable $X_i$ takes values in a separable Banach space $\mathcal{X}$. The sequence space $(\mathcal{X}^{\mathcal{I}},\mathcal{B}(\mathcal{X}^{\mathcal{I}}))$  with the Borel sigma-algebra inherits a probability measure $m_{X}$ from the underlying probability space, defined by 
\begin{equation} \label{m_X}
m_{X}(A)=\mathbb{P}\left\{ \omega: \; X^{\mathcal{I}}(\omega) \in A\right\} \quad \forall A \in \mathcal{B}(\mathcal{X}^{\mathcal{I}}).
\end{equation}
\begin{defn}
The process $\{X_i\}$ is called \emph{stationary} if $\forall t, k \in \mathbb{N}, \forall i_1,i_2,\ldots,i_k \in \mathcal{I}$,  the joint distribution of $(X_{i_1},X_{i_2},\ldots,X_{i_k})$ is the same as that of $(X_{i_1+t},X_{i_2+t},\ldots,X_{i_k+t})$. 
\end{defn}

In this case it is well-known \cite[p. 494]{Billingsley} that the measure $m_{X}$ is invariant with respect to the shift map $T: \mathcal{X}^{\mathcal{I}} \to \mathcal{X}^{\mathcal{I}}$ such that $T(\{x_i\})=\{x_{i+1}\}$. 
\begin{defn}
The process $\{X_i\}$ is called \emph{ergodic} if $\forall A \in \mathcal{B}(\mathcal{X}^{\mathcal{I}})$ such that $T^{-1}(A)=A$, we have that $m_X(A)$ is equal to $0$ or $1$. 
\end{defn}

We now go back to the channel model (\ref{eq:channel}). For the sake of simplicity, we suppose that $n_r \geq n$. If the fading process $\{H_i\}$ is stationary and ergodic, it is not hard to see that the random process $\{X_i\}=\left\{ \log \det(H_i^{\dagger}H_i)\right\}$ taking values in $\mathcal{X}=\R$ is also stationary and ergodic, and the shift $T: \R^{\mathcal{I}} \to \R^{\mathcal{I}}$ preserves the measure $m_{X}$ defined in (\ref{m_X}).\\
For an ergodic process such that the shift $T$ is measure-preserving, Birkhoff's theorem \cite{Pollicott_Yuri} guarantees that for any $f \in L^1(\mathcal{X}^{\mathcal{I}},\mathcal{B}(\mathcal{X}^{\mathcal{I}}),m_X)$, the sample means with respect to $f$ converge almost everywhere: for almost all $\{x_i\} \in \mathcal{X}^{\mathcal{I}}$,
\begin{equation} \label{Birkhoff}
\lim_{k \to \infty} \frac{1}{k} \sum_{n=1}^{k} f(T^n(\{x_i\}))= \int_{\mathcal{X}^{\mathcal{I}}} f dm_X.
\end{equation}
In particular, the projection $\Pi: \R^{\mathcal{I}} \to \R$ on the first coordinate is $L^1$ according to the image measure $m_{X}$ if and only if 
$\mathbb{E}\left[\abs{\log \det H^\dagger H}\right]< \infty$.
Under this hypothesis, Birkhoff's theorem implies the law of large numbers:
\begin{align} \label{LLN}
&\lim_{k \to \infty} \frac{1}{k} \sum_{i=1}^{k} X_i= \int_{\R^{\mathcal{I}}} \Pi(\{x_i\}) dm_X(\{x_i\})= \notag\\
&=\int_{\Omega} \Pi \circ X^{\mathcal{I}} d \mathbb{P}=\int_{\Omega} X_1 d\mathbb{P}=\mathbb{E}[X] \quad \text{a.e.}
\end{align}
In other words,
\begin{equation} \label{SLLN} 
\lim_{k \to \infty}  \frac{1}{k} \sum_{i=1}^k \log \det(H_i^{\dagger}H_i) = \mathbb{E}_H\left[ \log \det (H^{\dagger}H)\right]
\end{equation}
almost everywhere.\\ 
In the ergodic stationary case, it is well-known \cite{Tel, Caire_Elia_Kumar} that the ergodic capacity of the channel is well-defined and does not depend on the channel correlation with respect to time, but only on its first order statistics. Given a power constraint $P$ in equation (\ref{power_constraint}), the ergodic capacity (per channel use) is equal to
$$C(P)=\max_{Q_{\mathbf{x}}\geq 0, \tr(Q_{\mathbf{x}})\leq P} \mathbb{E}_{H}\left[\log \det (I_{n_r} + H Q_{\mathbf{x}} H^{\dagger})\right],$$
where $H$ is a random matrix with the same first-order distribution of the process $\{H_i\}$, which is independent of time by stationarity, and $Q_{\mathbf{x}}$ is the covariance matrix of the input $\mathbf{x}$ for one channel use\footnote{We note that the capacity (per channel use) of the block fading MIMO channel of finite block length $T$ with perfect channel state information at the receiver is independent of $T$ \cite[eq. (9)]{Marzetta_Hochwald}. So the previous result still holds in the multiblock case.}.\\
If we suppose that the channel is \emph{isotropically invariant}, i.e. the distribution of $H$ is invariant under right multiplication by unitary matrices, then the optimal input covariance matrix is $Q_{\mathbf{x}}=\frac{P}{n}I_n$ \cite{Tel} and we have 
$$C(P)=\mathbb{E}_{H}\left[\log \det \Big(I_{n_r} + \frac{P}{n} H H^{\dagger}\Big)\right].$$
Since $\det(I_{n_r}+\frac{P}{n}HH^{\dagger})=\det(I_n+\frac{P}{n}H^{\dagger}H)$, we can also write 
$$C(P)=\mathbb{E}_{H}\left[\log \det \Big(I_{n} + \frac{P}{n} H^{\dagger}H\Big)\right].$$
The following Corollary to Theorem \ref{prop_positive_rate} shows that in this case, the proposed multiblock codes can achieve a constant gap to ergodic capacity. 

\begin{cor} \label{corollary_ergodic}
Suppose that $n_r \geq n$ and that the fading process $\{H_i\}$ is ergodic, stationary and isotropically invariant. Moreover, suppose that $\mathbb{E}\left[\abs{\log \det H^\dagger H}\right]< \infty$. Let $L_{n,k} \subset M_{n \times nk}(\C)$ be a family of $2n^2k$-dimensional multiblock lattice codes such that 
$\mindet{L_{n,k}}=1$ and $\Vol(L_{n,k})^{\frac{1}{n^2k}} \leq C_L$. Then, any rate
$$R<\mathbb{E}_H\left[\log \det \frac{P}{n} H^{\dagger} H\right] - n \log C_L +n\log \frac{\pi e}{4n}$$
is achievable using the codes $L_{n,k}$ both with ML decoding and naive lattice decoding.
\end{cor}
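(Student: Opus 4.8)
The plan is to derive this corollary as a direct specialization of Theorem \ref{prop_positive_rate}: I would verify that the ergodic stationary hypotheses imply the weak law of large numbers required by that theorem, identify the constant $\mu$, and then match the two rate expressions by a short algebraic manipulation.

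First I would establish the weak law \eqref{WLLN}. Since $\{H_i\}$ is stationary and ergodic, the real-valued process $X_i=\log\det(H_i^{\dagger}H_i)$ is stationary and ergodic and the shift $T$ preserves the image measure $m_X$, as noted in Section \ref{ergodic_section}. The integrability hypothesis $\mathbb{E}\left[\abs{\log \det H^\dagger H}\right]<\infty$ is exactly the condition that places the coordinate projection in $L^1(m_X)$, so Birkhoff's theorem applies and yields the almost-everywhere convergence \eqref{SLLN},
$$\frac{1}{k}\sum_{i=1}^k \log\det(H_i^{\dagger}H_i) \to \mathbb{E}_H\left[\log\det(H^{\dagger}H)\right] \quad \text{a.e.}$$
Almost-sure convergence implies convergence in probability, so \eqref{WLLN} holds with $\mu=\mathbb{E}_H\left[\log\det(H^{\dagger}H)\right]$.

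Having identified $\mu$, I would invoke Theorem \ref{prop_positive_rate}, which guarantees that every rate $R<\mu+n\left(\log P-\log C_L+\log\frac{\pi e}{4n^2}\right)$ is achievable with both ML and naive lattice decoding. It then remains to rewrite this bound in the claimed form. Using $\log\det\frac{P}{n}H^{\dagger}H=n\log\frac{P}{n}+\log\det H^{\dagger}H$ and taking expectations gives
$$\mathbb{E}_H\left[\log\det\tfrac{P}{n}H^{\dagger}H\right]=n\log P-n\log n+\mu,$$
while $n\log\frac{\pi e}{4n^2}=n\log\frac{\pi e}{4n}-n\log n$. Substituting both identities converts the theorem's bound into
$$R<\mathbb{E}_H\left[\log\det\tfrac{P}{n}H^{\dagger}H\right]-n\log C_L+n\log\tfrac{\pi e}{4n},$$
which is precisely the rate asserted in the corollary.

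Because the statement is a direct corollary, there is no genuine obstacle in the argument: the only points requiring care are the passage from Birkhoff's almost-everywhere convergence to convergence in probability (routine) and the bookkeeping that absorbs the factors of $n$ and $\log n$. The isotropic-invariance hypothesis is not used in the achievability argument itself; its role is to produce the ergodic capacity $C(P)=\mathbb{E}_H\left[\log\det\left(I_n+\tfrac{P}{n}H^{\dagger}H\right)\right]$ derived earlier, against which the achieved rate can be compared, so that the bound above indeed lies within a constant gap of capacity.
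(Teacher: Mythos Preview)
Your proposal is correct and matches the paper's proof essentially step for step: the paper also invokes the almost-sure convergence \eqref{SLLN} (established via Birkhoff's theorem just before the corollary) to verify the hypothesis of Theorem \ref{prop_positive_rate} with $\mu=\mathbb{E}_H[\log\det H^{\dagger}H]$, and then performs the same algebraic rearrangement absorbing the $n\log n$ term. Your remark that isotropic invariance is used only to identify the white-input capacity, not in the achievability argument itself, is also accurate.
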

\begin{IEEEproof}
From equation (\ref{SLLN}), we have that the hypotheses of Theorem \ref{prop_positive_rate} are satisfied (actually, only the weak law of large numbers was required). Consequently, any rate 
\begin{align*}
&R< n\log P +\mathbb{E}_H\left[\log \det H^{\dagger} H\right] - n \log C_L +n\log \frac{\pi e}{4n^2}=\\
&=\log\left(\frac{P}{n}\right)^n + \mathbb{E}_H\left[\log \det H^{\dagger} H\right] - n \log C_L +n\log \frac{\pi e}{4n}=\\
&=\mathbb{E}_H\left[\log \det \frac{P}{n} H^{\dagger} H\right] - n \log C_L +n\log \frac{\pi e}{4n}
\end{align*} 
is achievable. 
\end{IEEEproof}
A similar corollary to Theorem \ref{prop_positive_rate2} holds in the case $n_r <n$: 
\begin{cor}
Suppose that $n_r < n$ and that the fading process $\{H_i\}$ is ergodic, stationary and isotropically invariant. Moreover, suppose that $\mathbb{E}\left[\abs{\log \det H H^\dagger}\right]< \infty$. Let $L_{n,k} \subset M_{n \times nk}(\C)$ be a family of $2n^2k$-dimensional multiblock lattice codes such that 
$\mindet{L_{n,k}}=1$ and $\Vol(L_{n,k})^{\frac{1}{n^2k}} \leq C_L$. Then, any rate $R$ lower than
$$\mathbb{E}_H\Big[\log \det \frac{P}{n} H H^{\dagger}\Big] - 2n_r-(n-n_r) \log \frac{n}{n-n_r}+ n \log \frac{\pi e}{n C_L}$$
is achievable using the codes $L_{n,k}$ both with ML decoding and naive lattice decoding.
\end{cor}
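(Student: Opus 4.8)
The plan is to derive this statement from Theorem \ref{prop_positive_rate2} in exactly the same way Corollary \ref{corollary_ergodic} was derived from Theorem \ref{prop_positive_rate}; the only differences are that the relevant $n_r \times n_r$ Gram matrix is now $H_iH_i^{\dagger}$ rather than $H_i^{\dagger}H_i$, and that the final algebraic simplification is slightly longer.

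First I would check that the hypotheses of Theorem \ref{prop_positive_rate2} hold. Since $\{H_i\}$ is stationary and ergodic, the scalar process $\{\log \det(H_iH_i^{\dagger})\}$ taking values in $\R$ is again stationary and ergodic, because $H \mapsto \log\det(HH^{\dagger})$ is measurable and commutes with the shift $T$; this is the same observation made just before Corollary \ref{corollary_ergodic}. The integrability hypothesis $\mathbb{E}[\abs{\log\det HH^{\dagger}}]<\infty$ guarantees that the coordinate projection lies in $L^1$ of the image measure, so Birkhoff's theorem (\ref{Birkhoff}) applies and yields, exactly as in (\ref{SLLN}),
$$\lim_{k\to\infty}\frac{1}{k}\sum_{i=1}^k \log\det(H_iH_i^{\dagger}) = \mathbb{E}_H[\log\det(HH^{\dagger})] \quad \text{a.e.}$$
Almost-everywhere convergence implies convergence in probability, so condition (\ref{WLLN2}) of Theorem \ref{prop_positive_rate2} is satisfied with $\mu = \mathbb{E}_H[\log\det(HH^{\dagger})]$.

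Then I would substitute this value of $\mu$ into the rate bound of Theorem \ref{prop_positive_rate2} and simplify. The only identity required is that, since $HH^{\dagger}\in M_{n_r}(\C)$,
$$\mathbb{E}_H\Big[\log\det\tfrac{P}{n}HH^{\dagger}\Big] = n_r(\log P - \log n) + \mathbb{E}_H[\log\det(HH^{\dagger})].$$
Expanding $n\log\tfrac{\pi e}{n^2C_L}$ and $-(n-n_r)\log\tfrac{n}{n-n_r}$ in the target expression and collecting the $\log P$, constant, and $\log n$ terms, the bound $\mu + n_r(\log P-2)+(n-n_r)\log(n-n_r)+n\log\tfrac{\pi e}{n^2C_L}$ rearranges precisely into the stated rate. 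The point deserving care is that the $\log n$ contributions coming from the three different sources must cancel down to $-2n\log n$.

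I do not expect any serious obstacle: the transfer of ergodicity and stationarity to the scalar process is identical to the $n_r\geq n$ case, and the achievability itself is already contained in Theorem \ref{prop_positive_rate2}, so only the bookkeeping in the final simplification requires attention. I would also note that the isotropic-invariance hypothesis is not actually needed to establish achievability of the stated rate; as in Corollary \ref{corollary_ergodic}, it serves to identify the uniform allocation $Q_{\mathbf{x}}=\frac{P}{n}I_n$ as capacity-achieving, so that the result can be read as a constant gap to ergodic capacity via an argument analogous to Remark \ref{remark_constant_gap}.
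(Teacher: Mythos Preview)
Your proposal is correct and follows exactly the approach the paper intends: the paper merely states that this is ``a similar corollary to Theorem \ref{prop_positive_rate2}'' and gives no proof, so deriving the weak law from Birkhoff's theorem as in Corollary \ref{corollary_ergodic} and then substituting $\mu=\mathbb{E}_H[\log\det(HH^\dagger)]$ into Theorem \ref{prop_positive_rate2} is precisely what is meant. Your algebraic bookkeeping is right, and your side remark on isotropic invariance is accurate.
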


\begin{remark} \label{remark_constant_gap2}
Using the same argument as in Remark \ref{remark_constant_gap}, we can show that the achievable rate is within a constant gap from capacity, although this constant will depend on the fading model. \\
Under the hypothesis $\mathbb{E}_H\left[\abs{\log\det H^{\dagger}H}\right]$, we have $\abs{\mathbb{E}_H\left[\log \det H^{\dagger}H\right]}<\infty$. The achievable rate is of the form 
$R(P)=\max\left(0,n\log\frac{P}{n}+\mathbb{E}_H[\log\det H^{\dagger}H] -c\right)$
for some constant $c>0$. Let $P_{\min}$ be the smallest value of $P$ such that $R(P)>0$ if $P>P_{\min}$. For $P \leq P_{\min}$ we have that $C(P)-R(P)=C(P) \leq C(P_{\min})$. \\
Let $\lambda_i$, $i=1,\ldots,n$ be the (random) singular values of $H$. For $P>P_{\min}$,
$C(P)-R(P)=\sum_{i=1}^n \mathbb{E}_H\left[\log \left(1+\frac{n}{P\lambda_i}\right)\right]+c$ which is a strictly decreasing function of $P$ and tends to $c$ when $P \to \infty$. This shows that the gap is uniformly bounded. 
\end{remark}

\section{Achievable rates and error probability bounds for i.i.d. Rayleigh fading channels} \label{Rayleigh_fading}

We now suppose that the coefficients of $H_i$ are circular symmetric complex Gaussian with zero mean and unit variance per complex dimension, and that the fading blocks $H_i$ are independent. In this case, the achievable rate can be computed explicitly, and we can prove that the error probability vanishes exponentially fast. \\
Let $\psi(x)=\frac{d}{dx} \ln \Gamma(x)$ denote the Digamma function. Then we have the following:
\begin{prop} \label{prop_Rayleigh}
Let $L_{n,k} \subset M_{n \times nk}(\C)$ be a family of $2n^2k$-dimensional multiblock lattice codes such that 
$\mindet{L_{n,k}}=1$ and $\Vol(L_{n,k})^{\frac{1}{n^2k}} \leq C_L$. Then, over the $(n,n_r,k)$ multiblock channel, these codes achieve any rate
\begin{equation*}
R < \mathbb{E}_H\left[\log \det \frac{P}{n} H^{\dagger}H\right]  - n\log C_L + n \log\frac{\pi e}{4n},
\end{equation*}
where 
\begin{equation} \label{rate_Gaussian}
\mathbb{E}_H\left[\log \det \frac{P}{n} H^{\dagger}H\right]= \log\frac{P}{n}  e^{\frac{1}{n}\sum\limits_{i=n_r-n+1}^{n_r} \psi(i)}.
\end{equation}
Moreover, the error probability vanishes exponentially fast. 
\end{prop}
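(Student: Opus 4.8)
The plan is to treat the three assertions of the proposition separately, since the achievable rate is essentially already in hand. I would first observe that i.i.d.\ Rayleigh fading is a special case of the ergodic stationary setting of Section \ref{ergodic_section}: the blocks $H_i$ are independent and identically distributed, hence the process is stationary and ergodic, and the distribution of a matrix with i.i.d.\ $\mathcal{CN}(0,1)$ entries is invariant under right multiplication by unitary matrices, so it is isotropically invariant. Since the log-moments of a complex Wishart matrix are finite (as the next step makes explicit), the hypothesis $\mathbb{E}[\abs{\log\det H^\dagger H}]<\infty$ of Corollary \ref{corollary_ergodic} holds, and for $n_r\geq n$ the achievable rate is exactly the one stated. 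This reduces the proposition to (i) evaluating $\mu=\mathbb{E}_H[\log\det H^\dagger H]$ in closed form and (ii) upgrading the convergence in probability to an exponential estimate.

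For the explicit formula I would use the Bartlett (Cholesky) decomposition of the complex Wishart matrix $H^\dagger H$, where $H$ is $n_r\times n$ with i.i.d.\ $\mathcal{CN}(0,1)$ entries and $n_r\geq n$. This yields the distributional identity
$$\det(H^\dagger H) \stackrel{d}{=} \prod_{i=1}^{n} g_i,$$
where the $g_i$ are independent and $g_i\sim\mathrm{Gamma}(n_r-i+1,1)$ (equivalently $2g_i\sim\chi^2_{2(n_r-i+1)}$). Taking logarithms and using the standard identity $\mathbb{E}[\ln g]=\psi(d)$ for $g\sim\mathrm{Gamma}(d,1)$, I obtain $\mathbb{E}[\ln\det H^\dagger H]=\sum_{i=1}^n\psi(n_r-i+1)=\sum_{i=n_r-n+1}^{n_r}\psi(i)$. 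Re-expressing in base-$2$ logarithm and adding the deterministic term $n\log(P/n)$ coming from $\det(\frac{P}{n}H^\dagger H)=(P/n)^n\det(H^\dagger H)$ then gives equation \eqref{rate_Gaussian}.

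For the exponential decay I would return to the bound \eqref{error_probability_multiblock} of Lemma \ref{sphere_bound}. The first term $2e^{-kn^2\epsilon^2/8}$ already decays exponentially in $k$, so it suffices to control the second term, which (as in the proof of Theorem \ref{prop_positive_rate}) is a lower-tail probability for the i.i.d.\ sum $\frac{1}{k}\sum_{i=1}^k\log\det(H_i^\dagger H_i)$. By Remark \ref{one_sided_convergence}, it is enough to show that this lower tail decays exponentially, which I would do with a Chernoff bound: for $s>0$,
$$\mathbb{P}\left\{\frac{1}{k}\sum_{i=1}^{k} \log\det(H_i^\dagger H_i)<\mu-\epsilon\right\}\leq\left(e^{s(\mu-\epsilon)}\,\mathbb{E}\big[e^{-s\log\det H^\dagger H}\big]\right)^{k}.$$
Writing $\phi(s)$ for the bracketed quantity, we have $\phi(0)=1$ and $\phi'(0)=-\epsilon<0$, so $\phi(s)<1$ for small $s>0$, giving the desired exponential rate, \emph{provided} $\mathbb{E}[e^{-s\log\det H^\dagger H}]=\mathbb{E}[\det(H^\dagger H)^{-s/\ln 2}]$ is finite for some $s>0$.

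This finiteness of a negative moment is the only genuinely delicate point, and I expect it to be the main obstacle. Using the same product representation, $\mathbb{E}[\det(H^\dagger H)^{-t}]=\prod_{i=1}^n\frac{\Gamma(n_r-i+1-t)}{\Gamma(n_r-i+1)}$, which is finite precisely when $t$ is smaller than the smallest shape parameter $n_r-n+1$. The assumption $n_r\geq n$ guarantees $n_r-n+1\geq 1>0$, so a strictly positive range of admissible $t$ (hence of $s$) exists; this is exactly where the hypothesis $n_r\geq n$ enters, and where the argument would break down for $n_r<n$. With this negative moment in hand the Chernoff estimate closes the proof.
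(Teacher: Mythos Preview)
Your proposal is correct and follows essentially the same three-step route as the paper: invoke Corollary \ref{corollary_ergodic} for the achievable rate, use the product-of-Gammas decomposition of the complex Wishart determinant to evaluate $\mathbb{E}[\log\det H^\dagger H]$, and apply a Chernoff bound together with the explicit negative moment $\mathbb{E}[(\det H^\dagger H)^{-t}]=\prod_j \Gamma(j-t)/\Gamma(j)$ (finite for $t<n_r-n+1$) to get exponential decay. The only minor difference is in certifying that the Chernoff exponent is strictly negative: you argue via $\phi(0)=1$, $\phi'(0)=-\epsilon<0$, whereas the paper optimizes the Chernoff parameter explicitly and then uses the mean value theorem for $\ln\Gamma$ to check the sign---both are valid, and your version is slightly more streamlined at the cost of not identifying the optimal exponent.
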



\begin{IEEEproof}[Proof of Proposition \ref{prop_Rayleigh}]
The first statement follows from Corollary \ref{corollary_ergodic}. The next step is to prove equation (\ref{rate_Gaussian}). It is well-known \cite{Goodman,Edelman} that if $H$ is an $n_r \times n$ matrix with i.i.d. complex Gaussian entries having variance per real dimension $1/2$, the random variable $\det(H^{\dagger}H)$, corresponding to the determinant of the Wishart matrix $H^{\dagger}H$, is distributed as the product 
$$V_{n,n_r}= Z_{n_r-n+1} Z_{n_r-n+2} \cdots  Z_{n_r}$$ 
of $n$ independent variables, such that $\forall j= n_r-n+1,\ldots,n_r$, $2Z_j$ is a chi square random variable with $2j$ degrees of freedom. The density of $Z_j$ is $p_{Z_j}(x)=\frac{x^{j-1}e^{-x}}{\Gamma(j)}$.
We have
\begin{align*}
&\mathbb{E}[\ln Z_j]= \frac{1}{\Gamma(j)} \int_0^{\infty}  x^{j-1} e^{-x} \ln x\; dx= \psi(j),\\
&M_{n,n_r}=\mathbb{E}[\ln V_{n,n_r}]=\sum_{j=n_r-n+1}^{n_r} \psi(j)=\mathbb{E}_H\left[\ln \det H^{\dagger} H\right].
\end{align*}
Then if we consider the base $2$ logarithm, we find 
\begin{align*}
&\mathbb{E}_{H}\left[\log \det H^{\dagger} H\right] =\mathbb{E}[\log V_{n,n_r}]=\frac{M_{n,n_r}}{\ln2}=\\
&=\frac{\sum_{j=n_r-n+1}^{n_r} \psi(j)}{\ln2}=n \log e^{\frac{1}{n} \sum_{j=n_r-n+1}^{n_r} \psi(j)}
\end{align*}
which concludes the proof of equation (\ref{rate_Gaussian}).\\
In order to show that the error probability converges exponentially fast, by Remark \ref{one_sided_convergence} it is enough to show that we have exponential convergence in equation (\ref{one_sided_LLN}). \\  
Consider a sequence of i.i.d. random variables $\ln V_{n,n_r}^{(i)}$, $i=1,\ldots,k$, with the same distribution as $\ln V_{n,n_r}$. Using the Chernoff bound \cite{Proakis}, given $\delta>0$, $\forall v>0$ we have
\begin{align}
&\mathbb{P}\left\{ \frac{M_{n,n_r}}{\ln 2} -\frac{1}{k} \sum_{i=1}^k \log \det H_i^{\dagger}H_i \geq \frac{\delta}{\ln 2}\right\}= \notag\\
&=\mathbb{P}\left\{ M_{n,n_r} -\frac{1}{k} \sum_{i=1}^k \ln \det H_i^{\dagger}H_i \geq \delta\right\}= \notag\\
&=\mathbb{P}\left\{ M_{n,n_r} -\frac{1}{k} \sum_{i=1}^k \ln V_{n,n_r}^{(i)} \geq  \delta\right\} \leq \notag\\
&\leq  e^{kv(M_{n,n_r}-\delta)}\left(\mathbb{E}[e^{-v\ln V_{n,n_r}}]\right)^k \label{Chernoff_bound}
\end{align}
The tightest bound in (\ref{Chernoff_bound}) is obtained for $v_{\delta}$ such that 
\begin{equation*} 
\mathbb{E}[-\ln V_{n,n_r} e^{-v_{\delta} \ln V_{n,n_r}}]= (\delta-M_{n,n_r}) \mathbb{E}[e^{-v_{\delta} \ln V_{n,n_r}}].
\end{equation*}
Observe that 
{\allowdisplaybreaks
\begin{align}
&\mathbb{E}[Z_j^{-v}]=\frac{1}{\Gamma(j)}  \int_0^{\infty} x^{j-1-v} e^{-x} dx=\frac{\Gamma(j-v)}{\Gamma(j)}, \label{eq1}\\
& \mathbb{E}[Z_j^{-v}\ln Z_j]=\frac{1}{\Gamma(j)} \int_0^{\infty} x^{j-1-v} e^{-x} \ln x \; dx = \notag \\
&=\frac{\Gamma(j-v)}{\Gamma(j)}\psi(j-v). \label{eq2}
\end{align} }
Thus we find
{\allowdisplaybreaks
\begin{align*}
&\mathbb{E}\left[e^{-v\ln V_{n,n_r}}\right]=\mathbb{E}\left[V_{n,n_r}^{-v}\right]=\prod_{j=n_r-n+1}^{n_r} \mathbb{E}\left[Z_j^{-v}\right]=\\
&=\prod_{j=n_r-n+1}^{n_r} \frac{\Gamma(j-v)}{\Gamma(j)},\\
&\mathbb{E}\left[-\ln V_{n,n_r} e^{-v \ln V_{n,n_r}}\right]=\mathbb{E}\left[-V_{n,n_r}^{-v} \ln V_{n,n_r}\right]=\\
&=\sum_{j=n_r-n+1}^{n_r} \mathbb{E}\left[-\ln Z_j \prod_{l=n_r-n+1}^{n_r} Z_l^{-v}\right]=\\
&=\sum_{j=n_r-n+1}^{n_r} \left(\prod_{l\neq j} \mathbb{E}[Z_l^{-v}]\right) \mathbb{E}[-Z_j^{-v}\ln Z_j]=\\
&=-\sum_{j=n_r-n+1}^{n_r} \left(\prod_{l\neq j} \frac{\Gamma(l-v)}{\Gamma(l)}\right) \frac{\Gamma(j-v)}{\Gamma(j)} \psi(j-v)=\\
&=-\prod_{l=n_r-n+1}^{n_r}\frac{\Gamma(l-v)}{\Gamma(l)} \sum_{j=n_r-n+1}^{n_r} \psi(j-v)   
\end{align*}
}
Consequently, the tightest bound in (\ref{Chernoff_bound}) is achieved for $v_\delta$ such that
\begin{align} \label{v_delta}
&\delta=\sum_{l=n_r-n+1}^{n_r} (\psi(l)-\psi(l-v_{\delta})).
\end{align}
Note that as $\delta \to 0$, $v_{\delta} \to 0$. The right-hand side in equation (\ref{Chernoff_bound}) for $v=v_{\delta}$ can be rewritten as 
{\allowdisplaybreaks
\begin{align*}
&e^{kv_{\delta}(-\delta+\sum_{j=n_r-n+1}^{n_r} \psi(j))}\left(\prod_{l=n_r-n+1}^{n_r} \frac{\Gamma(l-v_{\delta})}{\Gamma(l)}\right)^k =\\
&=e^{k(-v_{\delta} \delta +\sum_{j=n_r-n+1}^{n_r}(v_{\delta} \psi(j) +\ln \Gamma(j-v_{\delta})-\ln\Gamma(j)))}=\\
&=e^{k\sum_{j=n_r-n+1}^{n_r}(v_{\delta}\psi(j-v_{\delta})-\ln \Gamma(j)+\ln\Gamma(j-v_{\delta}))}
\end{align*}
}%
using (\ref{v_delta}).\\
Recall that $\Gamma(x)$ is monotone decreasing for $0 <x < a_0=1.461632\ldots$ and monotone increasing for $x>a_0$. Using the mean value theorem for the function $\ln \Gamma(x)$ in the interval $[i-v_{\delta},i]$ we get that for $i=1$, $v_{\delta} \psi(1-v_{\delta})+\ln \Gamma(1-v_{\delta}) \leq 0$, and for $i\geq 2$, $v_{\delta} \psi(i-v_{\delta}) \leq \ln \Gamma(i) - \ln \Gamma (i-v_{\delta})$. Thus, the exponent is negative both for $n=n_r$ and for $n>n_r$. 
We can conclude that 
\begin{align*}
\mathbb{P}\left\{ \frac{M_{n,n_r}}{\ln 2} -\frac{1}{k} \sum_{i=1}^k \log \det H_i^{\dagger}H_i \geq \frac{\delta}{\ln 2}\right\}\leq e^{-k K_{n,n_r,\delta}}
\end{align*}
for some positive constant $K_{n,n_r,\delta}$. 
Since the bound holds $\forall \delta>0$, using Remark \ref{one_sided_convergence} with $\mu=\frac{M_{n,n_r}}{\ln 2}$, we find that the error probability tends to $0$ exponentially fast for any rate
\begin{align} 
R<n\left(\log \frac{P}{n} e^{\frac{1}{n} \sum_{i=n_r-n+1}^{n_r} \psi(i)}-\log C_L + \log \frac{\pi e}{4n}\right).
\tag*{\IEEEQED}
\end{align}
\let\IEEEQED\relax%
\end{IEEEproof}

\begin{cor} \label{prop_highdegree}
Over the $(n,n,k)$ multiblock channel, reliable communication is guaranteed when $k \to \infty$ for rates
\begin{equation*}
R < n\left(\log\frac{P}{n}  e^{\frac{1}{n}\sum_{i=1}^n \psi(i)}  + \log\frac{\pi e}{2n}  - \log {23}^{\frac{1}{10}\left(1-\frac{1}{n}\right)} G \right)
\end{equation*}
when using the multiblock code construction in Proposition \ref{prop:volume}. 
\end{cor}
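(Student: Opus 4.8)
The plan is to derive this corollary as a direct specialization of Proposition \ref{prop_Rayleigh} to the symmetric case $n_r = n$, instantiated with the explicit lattice family of Proposition \ref{prop:volume}. Since Proposition \ref{prop_Rayleigh} requires only that $\mindet{L_{n,k}} = 1$ together with a uniform volume bound $\Vol(L_{n,k})^{1/n^2 k} \leq C_L$, and Proposition \ref{prop:volume} supplies a family meeting both conditions, the first task is merely to confirm that the construction qualifies and to read off the admissible constant $C_L$.

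First I would set $n_r = n$ in the rate bound of Proposition \ref{prop_Rayleigh}; the Digamma sum $\sum_{i=n_r-n+1}^{n_r}\psi(i)$ then collapses to $\sum_{i=1}^{n}\psi(i)$, giving the achievable rate $R < n\log\left(\frac{P}{n} e^{\frac{1}{n}\sum_{i=1}^n \psi(i)}\right) - n\log C_L + n\log\frac{\pi e}{4n}$. Here I rely on reading the expectation formula (\ref{rate_Gaussian}) with its implicit leading factor of $n$, i.e. as $n\log\left(\frac{P}{n} e^{\frac{1}{n}\sum \psi(i)}\right)$, consistent with the per-coordinate normalization of the Digamma sum. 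Next I would substitute the volume bound of Proposition \ref{prop:volume}, which yields $C_L \leq 23^{\frac{n-1}{10n}}(G/2)$ --- precisely the value recorded in the remark following Theorem \ref{prop_positive_rate} --- so that $\log C_L \leq \frac{1}{10}\left(1 - \frac{1}{n}\right)\log 23 + \log(G/2)$.

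The remainder is purely algebraic bookkeeping of the three constant contributions. Using the identity $\log\frac{\pi e}{4n} - \log\frac{G}{2} = \log\frac{\pi e}{2n} - \log G$, the terms $-n\log C_L + n\log\frac{\pi e}{4n}$ combine into $n\left(\log\frac{\pi e}{2n} - \log 23^{\frac{1}{10}(1 - \frac{1}{n})} G\right)$, which together with the leading capacity term reproduces the stated rate. I expect no genuine obstacle, since the statement is a corollary of two already-established results; the only points demanding care are the correct interpretation of (\ref{rate_Gaussian}) noted above and the tracking of the factors of $2$ and $n$ when merging $\log\frac{\pi e}{4n}$ with $\log(G/2)$.
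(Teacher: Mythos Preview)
Your proposal is correct and follows exactly the route the paper intends: the corollary is obtained by specializing Proposition~\ref{prop_Rayleigh} to $n_r=n$ and substituting the explicit volume bound $C_L=23^{(n-1)/10n}(G/2)$ from Proposition~\ref{prop:volume}, after which the identity $\log\frac{\pi e}{4n}-\log\frac{G}{2}=\log\frac{\pi e}{2n}-\log G$ yields the stated form. Your care in reading~(\ref{rate_Gaussian}) with the implicit leading factor $n$ is warranted and matches the last displayed line in the proof of Proposition~\ref{prop_Rayleigh}.
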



\section{Existence of asymptotically good lattices}\label{construction}
All of our capacity results depend on the existence of lattices with asymptotically good normalized minimum determinants, which was claimed in Section \ref{statement}.
In this section we will prove this result. \\
We will first recall the construction of single-block space-time codes from cyclic division algebras (see for example \cite{OBV}). Due to space constraints, we refer the reader to \cite{Reiner} for algebraic definitions. 
\begin{definition}\label{cyclic}
Let $K$ be an algebraic number field of degree $m$ and assume   that $E/K$ is a cyclic Galois
 extension of degree $n$ with Galois group
$Gal(E/K)=\left\langle \sigma\right\rangle$. We can define an associative $K$-algebra
$$
\mathcal{A}=(E/K,\sigma,\gamma)=E\oplus uE\oplus u^2E\oplus\cdots\oplus u^{n-1}E,
$$
where   $u\in\mathcal{A}$ is an auxiliary
generating element subject to the relations
$xu=u\sigma(x)$ for all $x\in E$ and $u^n=\gamma\in K^*$. We call the resulting algebra a \emph{cyclic algebra}.
\end{definition}
Here $K$ is the center of the algebra $\A$.

 \begin{definition}
 We  call $\sqrt{[\A:K]}$ the \emph{degree} of the algebra $\A$. It is easily verified that the degree of $\A$ is equal to $n$.
\end{definition}

We consider $\A$ as a right  vector space over $E$
and note that every  element $a=x_0+ux_1+\cdots+u^{n-1}x_{n-1}\in\mathcal{A}$
has the following representation as a matrix: 
\[
\phi(a)=\begin{pmatrix}
x_0& \gamma\sigma(x_{n-1})& \gamma\sigma^2(x_{n-2})&\cdots &
\gamma\sigma^{n-1}(x_1)\\
x_1&\sigma(x_0)&\gamma\sigma^2(x_{n-1})& &\gamma\sigma^{n-1}(x_2)\\
x_2& \sigma(x_1)&\sigma^2(x_0)& &\gamma\sigma^{n-1}(x_3)\\
\vdots& & & & \vdots\\
x_{n-1}& \sigma(x_{n-2})&\sigma^2(x_{n-3})&\cdots&\sigma^{n-1}(x_0)\\
\end{pmatrix}
\]

The mapping $\phi$ is called the \emph{left regular representation} of $\A$ and allows us to embed any cyclic algebra into $M_n(\C)$. Under such an embedding $\phi(\A)$ forms an $mn^2$-dimensional $\Q$-vector space. 

We are particularly interested in algebras $\A$ for which $\phi(a)$ is invertible for all non-zero $a\in\A$.

\begin{definition}\label{divisionalgebra}
A cyclic $K$-algebra $\D$ is a \emph{division algebra} if every
non-zero element of $\D$ is invertible.
\end{definition}

If we assume that $\D$ is a division algebra, then $\phi$ is an injective mapping to $M_n(\C)$ and every non-zero element in $\phi(\D)$ is invertible.
  However, $\phi(\D)$ is not a lattice. Therefore we will instead consider a  suitable subset of $\D$.

\begin{definition}
 A \emph{$\Z$-order} $\Lambda$ in $\D$ is a subring of $\D$ having the same identity element as
$\D$, and such that $\Lambda$ is a finitely generated
module over $\Z$ which generates $\mathcal{D}$ as a linear space over $\Q$.
\end{definition}

With the previous definition, the set $\phi(\Lambda)$ is a matrix lattice that can be used for coding over a single space-time block. 

A generalization of the embedding $\phi$ to the multiblock case was proposed in \cite{YB07,Lu} for division algebras whose center $K$ contains an imaginary quadratic field. In this paper we consider a more general multiblock construction developed in \cite{LSV}, which applies to any totally complex center $K$. \\
We say that a degree $2k$ number field  $K$ is totally complex if   for every $\Q$-embedding $\beta_i: K\hookrightarrow  \C$ the image $\sigma_i(K)$ includes complex elements.   The field $K$ has  $2k$ distinct $\Q$-embeddings $\beta_i: K\hookrightarrow\C$. As we assumed that $K$ is totally complex, each of these embeddings is part of a complex conjugate pair. We will denote by $\bbar{\beta_i}$ the embedding given by  $x\mapsto \bbar{\beta_i(x)}$.

For each $\beta_i$ we can find an embedding $\alpha_i: E\hookrightarrow \C$ such that $\alpha_i|_{K}=\beta_i$. This choice can be made in such a way that
$\bbar{\alpha_i}|_{K}=\bbar{\beta_i}$. 
We will suppose that the embeddings  $\{\alpha_1,\dots, \alpha_{2k}\}$  have been ordered in such a way that $\alpha_i=\bbar{\alpha_{i+m}}$, for $0\leq i\leq m$.  
Let $a$ be an element of $\D$ and $A=\phi(a)$. Consider the mapping $\varphi: \A\mapsto M_{n\times nk}(\C)$ given by
\begin{equation}\label{main_map}
a\mapsto (\alpha_1(A),\dots, \alpha_{k}(A)),
\end{equation}
where each $\alpha_i$  is extended to an embedding $\alpha_i: M_n(E)\hookrightarrow M_n(\C)$.\\
The following result was proven in \cite[Proposition 5]{LSV}:

\begin{proposition}\label{reg2}
Let $\Lambda$ be a $\Z$-order in $\D$ and $\varphi$ the previously defined embedding.
Then $\varphi(\Lambda)$ is a $2kn^2$-dimensional lattice in $M_{n \times nk}(\C)$ which satisfies
$$
\mindet{\varphi(\Lambda)}= 1, \,\, \Vol(\varphi(\Lambda))= 2^{-kn^2}\sqrt{|d(\Lambda/\Z)|}
$$
and
$$
\delta(\varphi(\Lambda))=\left(\frac{2^{2kn^2}}{|d(\Lambda/\Z)|}\right)^{1/4n}.
$$
\end{proposition}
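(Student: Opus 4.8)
The three assertions concern the lattice property of $\varphi(\Lambda)$, its minimum determinant, and its volume; the formula for $\delta$ will then be immediate. The plan is to treat these in turn. First I record that $\varphi$ is injective: if $\varphi(a)=0$ then in particular $\alpha_1(\phi(a))=0$, and since $\alpha_1$ is a field embedding applied entrywise and $\phi$ is injective (because $\D$ is a division algebra), this forces $a=0$. As $\Lambda$ is a $\Z$-order in $\D$, it is a free $\Z$-module of rank $\dim_\Q \D = n^2[K:\Q]=2kn^2$, matching the real dimension $2n\cdot nk=2n^2k$ of the ambient space $M_{n\times nk}(\C)$. Thus $\varphi(\Lambda)$ will be a full lattice as soon as its Gram matrix is shown to be nonsingular, which is part of the volume computation below.

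For the minimum determinant, take $a\in\Lambda\setminus\{0\}$ and write $A=\phi(a)$. Using $\det\alpha_i(A)=\alpha_i(\det A)$ and the fact that $\det A=\det\phi(a)=\mathrm{Nrd}_{\D/K}(a)=:\nu\in K$ is the reduced norm, I obtain
\[
\mathrm{pdet}(\varphi(a))=\prod_{i=1}^k\det\alpha_i(A)=\prod_{i=1}^k\beta_i(\nu),
\]
since $\alpha_i|_K=\beta_i$. Because $K$ is totally complex with conjugate pairs $\bbar{\beta_i}=\beta_{k+i}$, it follows that $|\mathrm{pdet}(\varphi(a))|^2=\prod_{i=1}^k\beta_i(\nu)\bbar{\beta_i(\nu)}=\prod_{j=1}^{2k}\beta_j(\nu)=N_{K/\Q}(\nu)$. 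Now $a$ is integral, so $\nu\in\OO_K$, and $\nu\neq 0$ since $a\neq 0$ in a division algebra; hence $N_{K/\Q}(\nu)$ is a nonzero rational integer and $|\mathrm{pdet}(\varphi(a))|=\sqrt{|N_{K/\Q}(\nu)|}\geq 1$. Taking $a=1$ gives $\nu=1$ with equality, so $\mindet{\varphi(\Lambda)}=1$.

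The volume is the crux. Writing $\langle X,Y\rangle=\Re\,\mathrm{Tr}(XY^\dagger)=\sum_{i=1}^k\Re\,\mathrm{Tr}(\alpha_i(A)\alpha_i(B)^\dagger)$ for $X=\varphi(a)$, $Y=\varphi(b)$, I would express the Gram matrix of a $\Z$-basis of $\Lambda$ in terms of the $\Q$-trace form $\mathrm{Tr}_{\D/\Q}$, whose Gram determinant is by definition $d(\Lambda/\Z)$. The mechanism is the matrix analogue of the Minkowski embedding: the full list $\alpha_1,\dots,\alpha_{2k}$ embeds $\D$ into $M_n(\C)^{2k}$ with a Gram determinant equal, up to sign, to $d(\Lambda/\Z)$, and since $\bbar{\alpha_i}=\alpha_{k+i}$ the conjugate blocks are redundant. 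Passing to the $k$ blocks actually retained, together with replacing the symmetric trace pairing by the Hermitian pairing $\Re\,\mathrm{Tr}(\cdot\,\cdot^\dagger)$, contributes one factor $\tfrac12$ for each of the $kn^2$ complex coordinates. This yields $\det G=\pm\,2^{-2kn^2}d(\Lambda/\Z)$; in particular $G$ is nonsingular, so $\varphi(\Lambda)$ is indeed a lattice, and
\[
\Vol(\varphi(\Lambda))=\sqrt{|\det G|}=2^{-kn^2}\sqrt{|d(\Lambda/\Z)|}.
\]
I expect the bookkeeping of the factor $2^{-kn^2}$ — the careful comparison of the Hermitian inner product on the retained $k$ blocks with the full $\Q$-trace form on $\D$ — to be the main obstacle; it is precisely where the total complexity of $K$ and the pairing $\alpha_i=\bbar{\alpha_{i+k}}$ enter.

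Finally, since $\varphi(\Lambda)$ has dimension $m=2n^2k$, Lemma \ref{scale} gives $\delta(\varphi(\Lambda))=\mindet{\varphi(\Lambda)}/\Vol(\varphi(\Lambda))^{nk/m}=\Vol(\varphi(\Lambda))^{-1/(2n)}$. Substituting the volume,
\[
\delta(\varphi(\Lambda))=\left(2^{-kn^2}\sqrt{|d(\Lambda/\Z)|}\right)^{-1/(2n)}=\left(\frac{2^{2kn^2}}{|d(\Lambda/\Z)|}\right)^{1/4n},
\]
as claimed.
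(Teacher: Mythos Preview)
The paper does not supply its own proof of this proposition: it simply cites \cite[Proposition~5]{LSV} and moves on. So there is no argument in the paper to compare your attempt against.

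That said, your outline is the standard route and almost certainly coincides with the argument in the cited reference. The minimum-determinant computation via $\mathrm{pdet}(\varphi(a))^2=N_{K/\Q}(\mathrm{Nrd}_{\D/K}(a))$ is clean and complete. For the volume, your plan --- relate the Hermitian Gram matrix of $\varphi$ on the $k$ retained blocks to the $\Q$-bilinear trace form $\mathrm{Tr}_{K/\Q}\circ\mathrm{Trd}_{\D/K}$ on the full $2k$ embeddings, then account for the halving from $\Re\,\mathrm{Tr}(\cdot\,\cdot^\dagger)$ versus the symmetric pairing --- is exactly right, and you correctly flag the bookkeeping of the $2^{-kn^2}$ factor as the only genuine work. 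One point to make precise there: the Hermitian conjugate $\alpha_i(B)^\dagger$ does not directly correspond to $\alpha_{k+i}(\phi(b))$ unless you also track how the transpose interacts with the left-regular representation $\phi$; in practice one shows that $\mathrm{Tr}\big(\alpha_i(\phi(a))\,\overline{\alpha_i(\phi(b))}^{\,t}\big)$ equals $\alpha_i\big(\mathrm{Trd}_{\D/K}(a\bar b')\big)$ for a suitable involution $b\mapsto \bar b'$ on $\D$, and then a change of $\Z$-basis absorbs this. This is the step that needs care, but it does go through.
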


Here  $d(\Lambda/\Z)$  is the $\Z$-discriminant of the order $\Lambda$. It is a non-zero integer we can associate to any $\Z$-order of  $\D$. We refer the reader to \cite{Reiner} for the relevant definitions. 

We can now see that in order to maximize the minimum determinant of a  multi-block code, we have to minimize the $\Z$-discriminant of the corresponding $\Z$-order $\Lambda$. 

The first step to attack this question is to assume that $\Lambda$ has some extra structure.
Let $\OO_K$ be the ring of algebraic integers of $K$. If we assume that $\Lambda$ is also an $\OO_K$ module, then the $\OO_K$-discriminant of $\Lambda$ is well-defined \cite{Reiner}, and will be denoted by $d(\Lambda/\mathcal O_K)$. The following formula holds: 
\begin{equation}
d(\Lambda/\mathbb Z)=N_{K/\mathbb Q}(d(\Lambda/\mathcal \OO_K))(d_K)^{n^2}, \label{d_Lambda_Z}
\end{equation}
where 
$N_{K/\mathbb Q}$ is the algebraic norm in $K$ and $d_K$ is the discriminant of the field $K$.

In the case of fixed center $K$, \cite{VHLR} addressed the problem of finding the division algebras with the smallest $\OO_K$-discriminant, yielding the densest MIMO lattices. The main construction is based on the following result (Theorem 6.14 in \cite{VHLR}):
\begin{theorem} \label{theorem_VHLR}
Let $K$  be a number field of degree $2k$ and $P_1$ and $P_2$ be two prime ideals of $K$. 
Then there exists a degree $n$ division algebra $\D$ having an $\OO_K$-order $\Lambda$ with discriminant
\begin{equation}\label{discriminant2}
d(\Lambda/\ZZ)=(N_{K/\mathbb Q}(P_1) N_{K/\mathbb Q}(P_2))^{n(n-1)} (d_K)^{n^2}.
\end{equation}
\end{theorem}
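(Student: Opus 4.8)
The plan is to obtain $\D$ by prescribing its local Hasse invariants and then to read off the discriminant of a maximal order from its ramification. First I would invoke the fundamental exact sequence of class field theory for the Brauer group,
$$0 \to \mathrm{Br}(K) \to \bigoplus_{v} \mathrm{Br}(K_v) \xrightarrow{\;\sum_v \mathrm{inv}_v\;} \Q/\Z \to 0.$$
Because $K$ is totally complex, every archimedean place is complex and contributes $\mathrm{Br}(\C)=0$, so the only constraint is that the finite local invariants sum to zero. I would therefore prescribe $\mathrm{inv}_{P_1}=1/n$, $\mathrm{inv}_{P_2}=-1/n$, and $\mathrm{inv}_P=0$ at every other finite prime: these sum to $0$, so by exactness there is a Brauer class realizing them. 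The division algebra in this class has index equal to the least common multiple of the local indices, i.e.\ the l.c.m.\ of the denominators of the invariants, which is $n$; since its index equals $n$ it is itself of degree $n$, and I take $\D$ to be it.

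To connect with the matrix representation $\phi$ of Definition~\ref{cyclic}, I would next use the classical fact that every central division algebra over a number field is cyclic (a consequence of the Albert--Brauer--Hasse--Noether theorem), so that $\D \cong (E/K,\sigma,\gamma)$ for a cyclic degree-$n$ extension $E/K$, as required by the constructions of Section~\ref{construction}. (Alternatively one can build $\D$ explicitly, as in \cite{VHLR}: using Grunwald--Wang, choose a cyclic $E/K$ of degree $n$ in which $P_1,P_2$ are inert, and then choose $\gamma\in K^*$ with $v_{P_1}(\gamma)=1$, $v_{P_2}(\gamma)=n-1$, and controlled valuations elsewhere, so that the invariants come out as above.)

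The heart of the argument is the discriminant of a maximal $\OO_K$-order $\Lambda\subset\D$. Since this discriminant is supported on the ramified primes, I would work locally: the discriminant exponent at a finite prime $P$ of local Schur index $m_P$ equals $n^2(1-1/m_P)$. This is $0$ at the split primes, where $\D_P\cong M_n(K_P)$ and the maximal order $M_n(\OO_{K_P})$ has unit discriminant, and it equals $n(n-1)$ at $P_1$ and $P_2$, where $\D_{P_i}$ is a division algebra of index $n$. Hence
$$d(\Lambda/\OO_K)=(P_1P_2)^{n(n-1)}.$$
The one delicate point is that this exponent formula must hold uniformly, including when $P_i$ lies above a rational prime dividing $n$ (the wildly ramified case); this is guaranteed by the structure theory of maximal orders in local division algebras, where the reduced discriminant is $P_i^{n-1}$ independently of wild ramification, so that the full discriminant is $P_i^{n(n-1)}$ in every case.

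Finally I would substitute into the tower formula \eqref{d_Lambda_Z}, using that $\Lambda$ has $\OO_K$-rank $n^2$ and that $N_{K/\Q}$ is multiplicative on ideals:
$$d(\Lambda/\Z) = N_{K/\Q}\!\big((P_1P_2)^{n(n-1)}\big)(d_K)^{n^2} = \big(N_{K/\Q}(P_1)\,N_{K/\Q}(P_2)\big)^{n(n-1)}(d_K)^{n^2},$$
which is the claimed identity. I expect the main obstacle to be the local discriminant computation of the third step: the abstract existence of $\D$ with the prescribed invariants is a clean consequence of class field theory, but pinning the exponent to \emph{exactly} $n(n-1)$ at each ramified prime, uniformly across tame and wild primes, is what genuinely requires the structure theory of maximal orders in local division algebras.
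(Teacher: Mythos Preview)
The paper does not prove this theorem: it is quoted verbatim as Theorem~6.14 of \cite{VHLR} and used as a black box in the proof of Proposition~\ref{prop:volume}. Your argument is a correct and standard proof of that result. The Brauer-group exact sequence gives the existence of $\D$ with the prescribed local invariants, the Albert--Brauer--Hasse--Noether theorem identifies its index as $\mathrm{lcm}(n,n,1,\dots)=n$ and guarantees cyclicity, and the local discriminant exponent $n^2(1-1/m_P)$ for a maximal order is exactly Reiner's formula, yielding $d(\Lambda/\OO_K)=(P_1P_2)^{n(n-1)}$; the tower formula \eqref{d_Lambda_Z} then finishes. One small remark: you invoke ``$K$ totally complex'' to kill the archimedean contribution, but the theorem as stated does not assume this, and it is not needed---even at a real place you may simply take local invariant $0$, so the reciprocity constraint $\sum_v \mathrm{inv}_v=0$ is still met by your choice.

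For comparison, the approach in \cite{VHLR} (which you sketch as an alternative) is the explicit one: use Grunwald--Wang to produce a cyclic $E/K$ of degree $n$ in which $P_1,P_2$ are inert, then pick $\gamma$ with the right valuations so that $(E/K,\sigma,\gamma)$ has the desired Hasse invariants. Your primary route via the Brauer group is cleaner for the existence step and makes the discriminant computation more transparent (it depends only on the local indices, not on the particular cyclic presentation), while the explicit construction has the practical advantage of producing $\D$ in the form $(E/K,\sigma,\gamma)$ directly, which is what the matrix embedding $\phi$ of Section~\ref{construction} consumes.
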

Theorem \ref{theorem_VHLR} suggests that in order to build families of $(n,n,k)$ multiblock codes with the largest normalized minimum determinant, we should proceed in two steps:
\begin{enumerate}
\item[a)] choose a sequence of center fields $K$ of degree $2k$ such that their discriminants $d_{K}$ grow as slowly as possible; 
\item[b)] given the center $K$, choose an algebra $\D$ satisfying (\ref{discriminant2}), where $P_1$ and $P_2$ are the prime ideals in $K$ with the smallest norms\footnote{However, we note \cite{LSV} that \emph{a priori} there may be a trade-off between these two choices, so that minimizing the two terms in (\ref{d_Lambda_Z}) separately may be suboptimal.}. 
\end{enumerate}

We now discuss the choice of a suitable sequence of center fields. The following theorem by Martinet \cite{Martinet} proves the existence of infinite sequences of totally complex number fields $K$ with small discriminants $d_K$. As we will see in the following, choosing such a field as the center of the algebra $\mathcal{D}$ is a key element to obtain a good normalized minimum determinant.   

\begin{theorem}[Martinet] \label{Martinet_theorem}
There exists an infinite tower of totally complex number fields $\{K_k\}$ of degree $2k$, where  $2k=5\cdot2^{2+t}$, such that
\begin{equation} \label{G}
 \abs{d_{K_k}}^{\frac{1}{2k}}=G,
\end{equation}
for $G \approx 92.368$.
\end{theorem}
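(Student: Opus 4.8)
The plan is to deduce Martinet's theorem from the theory of unramified class field towers, the crucial point being that the root discriminant is \emph{invariant} under unramified extensions. Concretely, if $L/F$ is a finite extension of number fields unramified at every finite prime, then the relative discriminant $\mathfrak{d}(L/F)$ equals $\OO_F$, so the tower formula for discriminants gives $d_L = N_{F/\Q}(\mathfrak{d}(L/F))\, d_F^{[L:F]} = d_F^{[L:F]}$. Taking $[L:\Q]$-th roots yields $\abs{d_L}^{1/[L:\Q]} = \abs{d_F}^{1/[F:\Q]}$. Thus, to produce an infinite family of fields all sharing the same root discriminant $G$, it suffices to exhibit a single base field with root discriminant $G$ sitting at the bottom of an infinite tower of everywhere-unramified extensions.

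First I would fix a totally complex base field $K_0$ of degree $20 = 5\cdot 2^2$ with root discriminant $\abs{d_{K_0}}^{1/20} = G \approx 92.368$, chosen so that its class group has large $2$-rank. Then I would invoke the Golod--Shafarevich criterion: writing $d_2(K_0) = \dim_{\F_2}\!\big(\mathrm{Cl}(K_0)/2\,\mathrm{Cl}(K_0)\big)$ for the $2$-rank of the class group and $r_1,r_2$ for the numbers of real and complex places, if $d_2(K_0) \geq 2 + 2\sqrt{r_1+r_2+1}$ then the maximal unramified $2$-extension $K_\infty/K_0$ is infinite. Every intermediate field is everywhere-unramified over $K_0$, and since a totally complex field remains totally complex in any extension, each is again totally complex.

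Because $K_\infty/K_0$ is infinite, its Galois group $\Gamma$ is an infinite pro-$2$ group and hence admits a chain of open subgroups of every $2$-power index; the corresponding fixed fields give, for each $t\geq 0$, an intermediate field $K_k$ with $[K_k:\Q] = 20\cdot 2^{t} = 5\cdot 2^{2+t} = 2k$. By the root-discriminant invariance established above, every field in the tower satisfies $\abs{d_{K_k}}^{1/2k} = \abs{d_{K_0}}^{1/20} = G$, which is exactly the assertion.

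The main obstacle is the very first step: exhibiting an explicit totally complex base field that \emph{simultaneously} has small root discriminant ($\approx 92.37$) and a $2$-class group whose rank is large enough to satisfy the Golod--Shafarevich inequality. These two demands pull in opposite directions, since a small discriminant forces the field to be arithmetically small, which tends to shrink the class group, whereas triggering an infinite tower requires a large $2$-rank. Verifying that a suitable $K_0$ exists --- typically by engineering a large genus-theoretic $2$-rank while keeping ramification, and hence the discriminant, under tight control --- is the delicate number-theoretic heart of Martinet's argument; once such a field is in hand, the remaining steps are formal consequences of class field theory.
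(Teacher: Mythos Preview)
The paper does not prove this theorem: it is quoted verbatim as a result of Martinet \cite{Martinet} and used as a black box. There is therefore no ``paper's own proof'' to compare your proposal against.

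That said, your outline is the correct mechanism behind Martinet's construction (invariance of the root discriminant under unramified extensions, plus an infinite $2$-class field tower guaranteed by Golod--Shafarevich), and the paper implicitly confirms one detail you left open: in the proof of Lemma~\ref{norm} it records that every field in the Martinet tower contains the specific degree-$20$ base field $F=\Q(\cos(2\pi/11),\sqrt{2},\sqrt{-23})$. This is exactly the $K_0$ whose existence you flagged as ``the main obstacle''; its root discriminant is the constant $G\approx 92.368$, and its $2$-class group has large enough rank to trigger the infinite tower. So your sketch is right in spirit, and the missing explicit ingredient is supplied elsewhere in the paper, but the paper itself offers no argument for the theorem beyond the citation.
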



The following Lemma shows that the number fields in the Martinet family have suitable primes of small norm yielding a good bound in Theorem \ref{theorem_VHLR}. 

\begin{lemma}\label{norm}
Every number field $K_k$ in the Martinet family has ideals $P_1$ and $P_2$ such that
$$
N_{K/\mathbb Q}(P_1)\leq 23^{k/10} \,\,\mathrm{and}\,\, N_{K/\mathbb Q}(P_2) \leq 23^{k/10}.
$$
\end{lemma}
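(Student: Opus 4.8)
The plan is to exploit the two structural features of the Martinet family recorded in Theorem \ref{Martinet_theorem}: the fields $K_k$ form an \emph{unramified} tower over a fixed base field $K_0$ of degree $20$, and $23$ is a rational prime that splits completely in $K_0$. Writing $2k = 5\cdot 2^{2+t} = 20\cdot 2^t$, we have $\deg K_0 = 20$ and $[K_k:K_0] = 2^t = k/10$, so the target bound $23^{k/10}$ is exactly $23^{[K_k:K_0]}$. This identity is the key to the whole argument.

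First I would reduce the statement to a splitting computation in the base field. Let $\mathfrak p$ be a prime of $K_0$ lying above a rational prime $p$ with residue degree $f(\mathfrak p\mid p)$, and let $P$ be any prime of $K_k$ above $\mathfrak p$. Since the tower $K_k/K_0$ is unramified, the ramification index $e(P\mid\mathfrak p)$ equals $1$, and the fundamental identity forces $f(P\mid\mathfrak p)\le [K_k:K_0]=2^t$. By multiplicativity of residue degrees,
\[
f(P\mid p) = f(P\mid\mathfrak p)\,f(\mathfrak p\mid p)\le 2^t f(\mathfrak p\mid p),
\]
so that $N_{K_k/\Q}(P)=p^{f(P\mid p)}\le p^{2^t f(\mathfrak p\mid p)}$. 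In particular, a \emph{degree-one} prime of $K_0$ above $p=23$ (so $f(\mathfrak p\mid 23)=1$) lifts to primes of $K_k$ of norm at most $23^{2^t}=23^{k/10}$, which is precisely the bound we want.

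It therefore suffices to show that $23$ admits at least two degree-one primes in $K_0$, i.e.\ that $23$ splits (at least partially into degree-one factors) in $K_0$. Here I would use that $K_0$ is built from the cyclotomic field $\Q(\zeta_{11})$: since $23\equiv 1 \pmod{11}$, the prime $23$ splits completely in $\Q(\zeta_{11})$, and one checks that it also splits completely in the remaining part of $K_0$. (This is exactly the condition that singles out $23$ as the smallest admissible prime, $23$ being the least prime congruent to $1$ modulo $11$.) Consequently $23$ has at least two degree-one primes in $K_0$; choosing primes $P_1\neq P_2$ of $K_k$ lying above two distinct such primes yields two prime ideals with $N_{K/\Q}(P_i)\le 23^{k/10}$, as claimed.

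The main obstacle is the last step: verifying that $23$ genuinely splits completely in the full degree-$20$ base field $K_0$, which requires unwinding Martinet's explicit construction and carrying out the local splitting analysis in the non-cyclotomic part, together with confirming that the tower over $K_0$ is everywhere unramified (so that the clean bound $f(P\mid\mathfrak p)\le[K_k:K_0]$ holds, with no contribution from ramification indices). Once these two facts are in place, the residue-degree bookkeeping above is routine.
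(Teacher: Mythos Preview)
Your approach is essentially the same as the paper's: pass to the degree-$20$ base field, find primes of norm $23$ there, and bound the residue degree in the extension $K_k/K_0$ by $[K_k:K_0]=k/10$. The paper does this in one line via transitivity of the norm, $N_{K_k/\Q}(P_i)=N_{F/\Q}(N_{K_k/F}(P_i))\le N_{F/\Q}(B_i)^{[K_k:F]}$, which is exactly your residue-degree bookkeeping repackaged.

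Two points where your write-up is off. First, the unramified hypothesis on the tower is not needed: for any prime $P$ of $K_k$ above a prime $B$ of $K_0$, the fundamental identity already gives $f(P\mid B)\le [K_k:K_0]$, regardless of ramification. So the ``main obstacle'' you flag about confirming the tower is everywhere unramified is not an obstacle at all. Second, your heuristic for why $23$ works is inaccurate. The base field in Martinet's construction is $F=\Q(\cos(2\pi/11),\sqrt{2},\sqrt{-23})$, and $23$ does \emph{not} split completely there: it ramifies in $\Q(\sqrt{-23})$. What is true is that $23$ splits completely in the degree-$10$ subfield $\Q(\cos(2\pi/11),\sqrt{2})$ (here your $23\equiv 1\pmod{11}$ observation is relevant, together with $23\equiv -1\pmod 8$), and then ramifies in the quadratic step up to $F$; the resulting primes of $F$ above $23$ have $e=2$, $f=1$, hence norm $23$. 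So the conclusion---that $F$ has at least two primes of norm $23$---is correct, but the mechanism is ramification plus residue degree one, not complete splitting. The appearance of $23$ is tied to $\sqrt{-23}$ sitting inside the base field, not to $23$ being the least prime $\equiv 1\pmod{11}$.
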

\begin{IEEEproof}
Every field $K_k$ has a  subfield $F=\Q(\cos(2\pi/11), \sqrt{2},\sqrt{-23}) $, where $[F:\Q]=20$ (see for example \cite[p. 395]{TsVl}). 
The field $F$ has prime ideals $B_1$ and $B_2$  such that $N_{F/\Q}(B_i)=23$. Let us now suppose that $P_1$ and $P_2$ are such prime ideals of  $K_k$ that $P_i\cap \OO_F=B_i$.
  Transitivity of the norm then gives us that
\begin{align}
N_{K_k/\Q}(P_i)=N_{F/\Q}( N_{K_k/F}(P_i))\leq N_{F/\Q} (B_i)^{2k/20}. \tag*{\IEEEQED}%
\end{align}
\let\IEEEQED\relax%
\end{IEEEproof}
Armed with this observation, we can finally prove Proposition \ref{prop:volume}.


\begin{IEEEproof}[Proof of Proposition \ref{prop:volume}]

Suppose that we have a degree $2k$ field extension $K$ in the Martinet family of totally complex fields such that (\ref{G}) holds. We know that this field $K$ has some primes $P_1$ and $P_2$  such that $N_{K/\mathbb Q}(P_1)\leq 23^{k/10} \,\,\mathrm{and}\,\,N_{K/\mathbb Q}(P_2) \leq 23^{k/10}$. Then, there exists a central division algebra $\mathcal{D}$ of degree $n$ over $K$, and a maximal order $\Lambda$ of $\mathcal{D}$, such that
\begin{align}
&d(\Lambda/\mathbb{Z})=(N_{k/\mathbb Q}(P_1) N_{K/\mathbb Q}(P_2))^{n(n-1)} (d_K)^{n^2} \leq \notag \\
& \leq (23^{k/5})^{(n(n-1))} (G^{2k})^{n^2}. \tag*{\IEEEQED}%
\end{align}%
\let\IEEEQED\relax%
\end{IEEEproof}%

\begin{remark}
We note that the number field towers in Theorem \ref{Martinet_theorem} are not the best known possible.  It was shown in \cite{Hajir_Maire} that one can construct a family of 
totally complex fields such that   $G<82.2$, but this choice would add some notational complications.
\end{remark}

\section{Corollaries to single antenna fading channel}\label{numberfields}
The single antenna fast fading channel is one of the special cases of the general channel model \eqref{eq:channel}. It is particularly illuminating as the connection to the classical AWGN  lattice coding is most striking. In this case the abstract matrix lattices  of Section \ref{construction} correspond to simple number field codes that have been studied for twenty years \cite{OV}.  Due to the familiarity and simplicity of this model we can most easily compare our work to previous research on the topic.


In the single antenna  case the channel model \eqref{eq:channel} gets simplified to 
\begin{equation}\label{single_channel}
y_i=h_i \cdot x_i + w_i,
\end{equation}
where $x_i$ are the transmitted symbols, and $\forall i=1,\ldots, k$,  $w_i$ are i.i.d. complex Gaussian random variables with variance $\sigma_h^2=\sigma^2=\frac{1}{2}$ per real dimension and $\{h_i\}$ is some complex fading process such that $\sum_{i=1}^k \frac{1}{k} \log|h_i|^2$ converges in probability to some constant when the number of blocks $k$ tends to infinity.

This scenario has received considerable interest in the case of an i.i.d. complex Gaussian fading  process $\{h_i\}$, and several works have focused on the design of lattice codes for this model \cite{GB, BERB}. The analysis of the union bound for the pairwise error probability for a lattice code $L \subset \C^k$ leads to a design criterion based on the maximization of the \emph{normalized product distance} 
$$\mathrm{Nd}_{p, min}(L)=\inf_{\mathbf{x} \in L \setminus \{0\}} \frac{\prod_{i=1}^k \abs{x_i}}{\Vol(L)^{\frac{1}{2}}}.$$
Note that the normalized product distance is a special case (for $n=1$) of the normalized minimum determinant in (\ref{normalized_minimum_determinant}).
Most of the works in the literature have focused on the optimization of the product distance for lattice signal constellations with a fixed number of blocks $k$; 
few authors  
\cite{Xing, FOV}
have also studied the upper and lower bounds for  $\mathrm{Nd}_{p, min}$  over all lattices when $k$ grows to infinity.

However, there has been no general consensus on whether significant gain could be achieved
from coding over an extensive number of fading realizations. For example the authors  in \cite{BPS} state that:  ``increasing the diversity does not necessarily increase to the same extent the performance: in fact, the minimum product distance decreases and the product kissing number increases. Simulations show that most of the gain is obtained for diversity orders up to  16''.  
In fact, the analysis of the distribution of pairwise errors in the union bound 
 as in \cite{VLL2013} shows that 
 the  \emph{product kissing number} \cite{BERB}, or number of worst case occurrences, will grow fast and \emph{a priori} might eat away the product distance gain.
However, this issue seems to be due to the suboptimality of the union bound rather than to the codes themselves.

In fact, let us consider an infinite family of $2k$-dimensional lattices  $L_k \subset \C^k$
with normalized product distance satisfying $(\mathrm{Nd_{p, min}}(L_k))^{2/k}\geq c$, for some positive constant $c$.  

According to Theorem \ref{prop_positive_rate} and Remark \ref{detform} we then have the following.
\begin{corollary}\label{singlecapacity}
Any rate $R$ 
$$
R< \mathbb{E}_h \left[ \log_2 P \abs{h}^2\right] +\log_2 \frac{\pi e}{4}+ \log_2 c,
$$
is achievable with the family $L_k$ of lattices over the fading channel \eqref{single_channel}. 
\end{corollary}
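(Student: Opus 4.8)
The plan is to obtain this statement as the single-antenna specialization ($n=n_r=1$) of Theorem \ref{prop_positive_rate}, so that the whole argument reduces to translating the hypotheses and simplifying the resulting rate.

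First I would set up the dictionary between the two models. In \eqref{single_channel} each channel ``matrix'' $H_i$ is the scalar $h_i$, so $\det(H_i^{\dagger}H_i)=\abs{h_i}^2$, and the assumption that $\sum_{i=1}^k \tfrac1k\log\abs{h_i}^2$ converges in probability is precisely the weak law of large numbers \eqref{WLLN} demanded by Theorem \ref{prop_positive_rate}, with limiting constant $\mu=\mathbb{E}_h[\log\abs{h}^2]$. Moreover, since $n=1$, the normalized minimum determinant $\delta(L_k)$ coincides with the normalized product distance $\mathrm{Nd}_{p, min}(L_k)$, as noted just after its definition.

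Next I would use Remark \ref{detform} to convert the scale-invariant hypothesis $(\mathrm{Nd}_{p, min}(L_k))^{2/k}\geq c$ into the two conditions \eqref{twocond}. Because $\delta$ is invariant under scaling, I rescale each $L_k$ so that $\mindet{L_k}=1$; Lemma \ref{scale} (with $n=1$, $m=2k$) then gives $\Vol(L_k)^{1/k}=\delta(L_k)^{-2/k}\leq 1/c$, so the family satisfies \eqref{twocond} with $C_L=1/c$. This rescaling only changes the energy-normalization constant $\alpha$ of Section \ref{lattice_basics} and leaves the achievable rate unaffected.

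Finally I would invoke Theorem \ref{prop_positive_rate} with $n=1$, $\mu=\mathbb{E}_h[\log\abs{h}^2]$, and $C_L=1/c$, which asserts achievability of every rate
\begin{equation*}
R<\mathbb{E}_h[\log\abs{h}^2]+\log P-\log\frac{1}{c}+\log\frac{\pi e}{4}.
\end{equation*}
Collecting $\mathbb{E}_h[\log\abs{h}^2]+\log P=\mathbb{E}_h[\log P\abs{h}^2]$ and $-\log\frac1c=\log c$ recovers the claimed bound. There is no genuinely hard step here: the only points requiring care are verifying that the convergence assumed in the channel model matches hypothesis \eqref{WLLN} (and observing, via Remark \ref{one_sided_convergence}, that one-sided convergence already suffices), together with the routine check that the scaling argument does not alter the achievable rate.
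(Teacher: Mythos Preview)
Your proposal is correct and follows exactly the route indicated in the paper, which simply states that the corollary follows from Theorem~\ref{prop_positive_rate} together with Remark~\ref{detform}. You have spelled out the dictionary $n=n_r=1$, $\det(H_i^\dagger H_i)=|h_i|^2$, $\mu=\mathbb{E}_h[\log|h|^2]$, and $C_L=1/c$ in more detail than the paper does, but the underlying argument is identical.
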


This result proves that indeed we gain by coding over an increasing number of blocks, assuming that we have a family of lattices $L_k$ with the described product distances. 
According to Proposition \ref{motivation}, 
the condition $(\mathrm{Nd_{p, min}}(L_k))^{2/k}\geq c$ implies that
$\mathrm{rh_G}(L_k)\geq kc$. It reveals that families of lattice codes with large product distance do not only have large Hermite invariants, but also that the Hermite invariants of the faded lattices are as large as well. 
Thus, the product distance is not only relevant in capacity considerations or in the high SNR scenario, but also plays a role when coding over a finite number of fading realizations for low SNR.

\subsection{Approaching capacity with number field codes}
Using the normalized product distance as a code design criterion led to lattice constructions based on number fields in \cite{BB,GB, BERB,OV}. 
However, none of these works considered capacity questions.    

Let us now show how  the construction in Proposition   \ref{reg2}, when specialized to the single antenna case, is just   the standard method used  to build lattice codes from number fields \cite{BERB} and how this method can be used to approach capacity in fast fading channels.

Let $K/\Q$ be  a totally complex extension of degree $2k$ and $\{\sigma_1,\dots,\sigma_k\}$ be a set  of  $\Q$-embeddings, such that we have chosen one from each complex conjugate pair. Then we can define a
\emph{relative canonical embedding} of $K$ into $\C^n$ by
$$
\varphi(x)=(\sigma_1(x),\dots, \sigma_k(x)).
$$
The ring of algebraic integers $\OO_K$ has a  $\ZZ$-basis $W=\{w_1,\dots ,w_{2k}\}$ and $\varphi(W)$ is a $\ZZ$-basis for the full  lattice $\varphi(\OO_K)$ in $\C^k$.

 Proposition \ref{reg2} now simplifies to the following.
\begin{corollary}\label{regnum}
Let $\varphi$ be the previously defined embedding and $K$ a degree $2k$ totally complex number field.
Then $\varphi(\OO_K)$ is a $2k$-dimensional lattice in $\C^k$ which satisfies
$$
\mindet{\varphi(\OO_K)}= 1, \,\, \Vol(\varphi(\OO_K))= 2^{-k}\sqrt{|d_K|}
$$
and
$$
\delta(\varphi(\OO_K))=\left(\frac{2^{2k}}{|d_K|}\right)^{1/4}.
$$
\end{corollary}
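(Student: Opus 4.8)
The plan is to obtain Corollary \ref{regnum} as the degenerate case $n=1$ of Proposition \ref{reg2}, so that the bulk of the work is to check that the general multiblock construction really does specialize to the relative canonical embedding. First I would observe that a cyclic algebra of degree $n=1$ over $K$ is forced to be $E=K$ with $\sigma=\mathrm{id}$, so that $\A=\D=K$ itself; a field is trivially a division algebra, and the left regular representation $\phi\colon K \to M_1(\C)=\C$ is simply the inclusion of scalars. Under this identification the embeddings $\alpha_i\colon M_n(E)\hookrightarrow M_n(\C)$ reduce to the restrictions $\beta_i=\sigma_i$ of the chosen $\Q$-embeddings of $K$, so the map $\varphi$ of \eqref{main_map} becomes exactly $x\mapsto(\sigma_1(x),\dots,\sigma_k(x))$, the relative canonical embedding in the statement.

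Next I would identify the $\Z$-order and its discriminant. The maximal order $\OO_K$ is a $\Z$-order in the degree-$1$ division algebra $K$, and by definition its $\Z$-discriminant $d(\OO_K/\Z)$ coincides with the field discriminant $d_K$. With these two identifications in hand, substituting $n=1$ and $d(\Lambda/\Z)=d_K$ into the three formulas of Proposition \ref{reg2} yields $\mindet{\varphi(\OO_K)}=1$, $\Vol(\varphi(\OO_K))=2^{-k}\sqrt{|d_K|}$, and $\delta(\varphi(\OO_K))=(2^{2k}/|d_K|)^{1/4}$, which are precisely the claimed values.

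As an independent sanity check, useful if one prefers a self-contained argument, I would verify the minimum determinant and the volume directly. For $n=1$ one has $\mathrm{pdet}(\varphi(x))=\prod_{i=1}^k\sigma_i(x)$, and since $K$ is totally complex the full algebraic norm factors as $|N_{K/\Q}(x)|=\prod_{i=1}^k|\sigma_i(x)|^2=|\mathrm{pdet}(\varphi(x))|^2$; for $x\in\OO_K\setminus\{0\}$ the norm is a nonzero rational integer, whence $|\mathrm{pdet}(\varphi(x))|\geq 1$ with equality at $x=1$, giving $\mindet{\varphi(\OO_K)}=1$. The volume is the standard Minkowski covolume of the canonical embedding of $\OO_K$ in the totally complex case ($r_1=0$, $r_2=k$), namely $2^{-k}\sqrt{|d_K|}$, and the value of $\delta$ then follows from Lemma \ref{scale} with $m=2k$, since in this case $nk/m=1/2$.

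I expect the only delicate point to be bookkeeping rather than mathematics: making sure that the ordering convention on the embeddings $\{\alpha_i\}$, pairing each $\alpha_i$ with its conjugate $\bbar{\alpha_{i+m}}$, collapses correctly when $E=K$, so that the $k$ retained coordinates of $\varphi$ are exactly one representative per conjugate pair of $\Q$-embeddings of $K$, matching the set $\{\sigma_1,\dots,\sigma_k\}$ fixed in the statement. Once this identification is pinned down, the corollary is an immediate transcription of Proposition \ref{reg2}.
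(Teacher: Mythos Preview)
Your proposal is correct and follows exactly the paper's approach: the paper states the corollary as the specialization of Proposition~\ref{reg2} to $n=1$ without further proof, and you have simply made the identifications $\D=K$, $\Lambda=\OO_K$, $d(\Lambda/\Z)=d_K$ explicit. The additional direct verification via the algebraic norm and Minkowski covolume is a nice sanity check but not required.
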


Using Martinet's family of fields $K_k$ from Theorem \ref{Martinet_theorem} and setting $L_{1,k}=\varphi(\OO_{K_k})$ we have
\begin{align*}
&\Vol(L_{1,k}) \leq\left(\frac{G}{2}\right)^{k} \,\mathrm{and}\,\, {\det}_{min}(L_{1,k})=1,
\end{align*}
where $G \approx 92.368$.
Specializing  to the case where the fading process is i.i.d complex Gaussian we have that any rate
\begin{equation} \label{achievable_rate_SISO}
 R < \log_2(Pe^{-\gamma}) - \log_2\left(\frac{2G}{\pi e}\right),
 \end{equation}
where  $e^{-\gamma}= E_h [\log|h_i|^2]$,   is achievable.


\subsection{Known bounds on discriminants and Hermite invariants}
Equation (\ref{achievable_rate_SISO}) reveals that the codes based on the Martinet family have  a rather large gap to capacity. However, the right-hand side of (\ref{achievable_rate_SISO}) is just a lower bound on the maximum achievable rate with lattice codes, and might be improved with a better error probability estimate and/or a better choice of the lattice sequence. 

An upper bound for the maximum achievable rate using this approach can be derived from a lower bound for the discriminant.
The Odlyzko bound \cite{Odlyzko} states that when $k\rightarrow\infty$ we have that  $|d_K|^{1/2k}\geq 22.3$. If it were possible to reach this lower bound with an ensemble of lattice codes, then any rate $R$ satisfying
\begin{equation} \label{Odlyzko_rate}
R < \log_2(Pe^{-\gamma}) - \log_2\left(\frac{44.6}{\pi e}\right),
\end{equation}
would be achievable. For small values of $k$, there exist number fields having considerably smaller root discriminants.
Table \ref{numbertable} \cite{Odlyzko} lists the best known root discriminants for totally complex number fields of degree
$2k$. The first four values are known to be optimal.

\begin{table}[htbp]
\caption{Best known root discriminants for totally complex number fields $K$ of small degree $2k$.}
\label{numbertable}
\begin{center}
\begin{normalsize}
\begin{tabular}[b]{|c|c|}
\hline
 $k$ &$|d_K|^{1/2k}$ \\
\hline
1&$1.732.. $ \\
\hline
2&$3.289..$  \\
\hline
3&$4.622..$  \\
\hline
4&$5.787..$  \\
\hline
5&$6.793..$ \\
\hline
\end{tabular}
\end{normalsize}
\end{center}
\end{table}

We note that even equation (\ref{Odlyzko_rate}) does not represent an absolute limit for the rates that are achievable with lattice codes and not even with algebraic lattices arising from number fields, and does not mean that the  performance of algebraic codes will always be bounded away from capacity. In fact, as seen in Corollary \ref{singlecapacity}, we are only interested in the normalized product distance of the lattices under consideration. 
For example, instead of considering the image of the ring of integers $\OO_K$ under the embedding $\varphi$, one can use an ideal of this ring of integers (see \cite{Oggier, FOV,ISIT2015_SISO}) or more generally any lattices with good normalized product distance. 

The Minkowski-Hlawka theorem provides a non-constructive proof of the existence of $2k$-dimensional lattices $L_k\subset \C^k$ having Hermite invariants $\mathrm{h}(L_k)\sim \frac{k}{\pi e}$ \cite{CS}. If it were possible to obtain also $h_G(L_k)\sim \frac{k}{\pi e}$ or equivalently  $(\mathrm{Nd_{p, min}}(L_k)) \sim \left( \frac{1}{\pi e}\right)^{k/2}$, then all rates satisfying
\begin{align}\label{gap}
&R< \mathbb{E}_h \left[ \log_2 P \abs{h}^2\right] -\log_2 \frac{4}{\pi e}+ \log_2 \frac{1}{\pi e}= \notag \\
& =\mathbb{E}_h \left[ \log_2 P \abs{h}^2\right] -2
\end{align}
would be achievable with this family of lattices. However, we do not know if this is possible.  The two bit gap  to $\mathbb{E}_h \big[ \log_2 P \abs{h}^2\big]$ in  \eqref{gap} would be exactly the same that is obtained in the AWGN case when using the hard sphere packing approach \cite[Chapter 3]{CS}. Still this two bit gap is not  a fundamental limit of the performance of lattice codes but likely an artifact of the suboptimal method to analyze the error.

\section{Geometry of numbers for fading channels}\label{fading_geom}
In the previous sections, we have shown that the normalized minimum determinant provides a design criterion to build capacity-approaching lattice codes for block fading multiple antenna channels. 
Let us now see how this approach fits into a more general context and can be regarded as a natural generalization of the classical theory of lattices for Gaussian channels.
Finally we show how the code design problems, both in Gaussian and fading channels, can be seen as instances of the same problem in the mathematical theory of geometry of numbers.

Consider a lattice $L \subset \C^k$ having fundamental parallelotope of volume one and define a function
$f_1:\C^k\to \R$ by 
\begin{equation}\label{euclidean}
f_1(x_1,\dots,x_k)=|x_1|^2+|x_2|^2+\cdots+ |x_k|^2.
\end{equation} 
The real number $\mathrm{h}(L)=\inf_{x \in L, \, x\neq{\bf 0}} f_1(x)$ is  
then the Hermite invariant of the lattice $L$.  Let us now denote with
$\mathcal{L}_k$ the set of all $2k$-dimensional lattices in $\C^k$ with volume one.

Suppose that we have an infinite family of lattices  $L_k\in \mathcal{L}_k$ with Hermite invariants satisfying $\frac{\mathrm{h}(L_k)}{k}\geq c$, for some positive constant $c$. As stated in the beginning of Section \ref{general_channels_section}, then all rates satisfying
$$
R < \log_2(P) - \log_2\left(\frac{4}{\pi e}\right)+\log_2 c,
$$
are achievable in the complex Gaussian channel, with this family of lattices. The Hermite invariant $\mathrm{h}(L_k)$ now roughly describes the performance of the lattice $L_k$ and can be used to estimate how close to the capacity a family of lattices can get. This relation is one of the key connections between the theory of lattices and information theory \cite{CS} and has sparked a remarkable amount of research.

\smallskip

Let us now see how  our results 
can be seen as natural generalizations of the relation between Hermite invariant and capacity.

Let us consider $2k$-dimensional lattices $L\subset \C^k$ in $\mathcal{L}_k$ and the form
\begin{equation}\label{fastfading}
f_2(x_1,x_2,\dots, x_k)=|x_1x_2\cdots x_k|.
\end{equation}
Then   $\mathrm{Nd_{p, min}}(L)=\inf_{x\in L, x\neq{\bf 0}} f_2(x) $, is the normalized product distance of the lattice $L$.

Assume that we have an infinite family of lattices  $L_k\in \mathcal{L}_k$ with normalized product distance satisfying $(\mathrm{Nd_{p, min}}(L_n))^{2/k}\geq c$, for some positive constant $c$. As seen before we have that all rates satisfying
$$R< \mathbb{E}_h \left[ \log_2 P \abs{h}^2\right] -\log_2 \frac{4}{\pi e}+ \log_2 c$$
are accessible with this family of lattices with zero error probability over the Rayleigh fast fading channel.

\smallskip

We denote with $\mathcal{L}_{(n,k)}$ the set of all $2n^2k$-dimensional lattices in the space $M_{n \times nk}(\C)$ with volume one. Given $(X_1, X_2,\dots, X_k) \in M_{n\times kn}(\C)$, we consider the function
$$
f_3(X_1, X_2,\dots, X_k)=\prod_{i=1}^k |\det(X_i)|.
$$
For $L \in \mathcal{L}_{(n,k)}$, we have  $$\delta(L)=\inf_{X\in L_k, X\neq{\bf 0}} f_3(X).$$

If  $L_k\subset  M_{n\times kn}(\C)$ is a family of lattices with the property that $\delta(L_k)^{2/kn}\geq c$ then according to  Remark \ref{detform}, any rate satisfying
$$
R< \mathbb{E}_H \left[ \log_2 \det \frac{P}{n} H^{\dagger}H\right] +n\log_2 \frac{\pi e}{4n}+n \log_2 c
$$
is achievable with the lattices $L_k$.

We can now see that the normalized minimum determinant and product distance can be regarded
as generalizations of the Hermite invariant which characterize the gap to capacity achievable with a certain family of lattice codes. 
\smallskip

A natural question is how close to capacity we can get with these methods by taking the best possible lattice sequences.

The \emph{Hermite constant} $H(k)$ can now be defined as 
\begin{equation}\label{f1}
H(2k)=\mathrm{sup}\{ \mathrm{h}(L) \mid L \in \mathcal{L}_{(1,k)}\}.
\end{equation}
In the same manner we can define
\begin{equation}\label{f2}
\mathrm{Nd_{p, min}}(k)=\mathrm{sup}\{\mathrm{Nd_{p, min}}(L) \mid L \in \mathcal{L}_{(1,k)}\}.
\end{equation} 
and
\begin{equation}\label{f3}
\delta(k,n)=\mathrm{sup}\{\delta(L) \mid L \in \mathcal{L}_{(n,k)}\}.
\end{equation}
Each of these constants now represents how close to capacity our methods can take us. Any asymptotic lower bound with respect to $k$ will immediately provide a lower bound for the achievable rate. Just as well upper bounds will give upper bounds for the rates that are approachable with this method.

\smallskip

The characterizations of achievable rates using lattice codes have now been transformed into purely geometrical questions about the existence of lattices with certain properties.
The value of the Hermite constant $H(k)$, for different values of $k$,  has been studied in mathematics for hundreds of years and there exists an extensive literature on the topic. In particular good  upper and lower bounds are available and it has been proven that we can get quite close to Gaussian capacity with this approach \cite[Chapter 3]{CS}. \\
In the case of the product distance, this problem has been considered in the context of algebraic number fields and some upper bounds have been provided. 
As far as we know the best lower bounds come from the existence results provided by number field constructions \cite{Xing} and \cite{ISIT2015_SISO}.

The 
properties 
of $\delta(k,n)$ 
have been far less researched in the literature. Simple upper bounds can be derived from bounds for Hermite constants as pointed out in \cite{LV}
and lower bounds are 
obtained from division algebra constructions as described in this paper, but  
the mathematical literature doesn't seem to offer any ready-made results for this problem. \\
However, all three questions can be seen as special cases of the problem of finding the minimum of a homogeneous form over a lattice in the mathematical theory of \emph{geometry of numbers} \cite{GL}. Let us now elaborate on the topic.

\begin{definition}
A continuous 
function 
$F$: $M_{n\times kn}(\C) \to \R$
is called a homogeneous form of degree $\sigma>0$ if it satisfies the relation
$$
|F(\alpha {X})|=|\alpha|^{\sigma} |F(X)|\quad (\forall \alpha \in \R, \forall X \in M_{n\times kn}(\C)).
$$
\end{definition}

Let us consider the body $S(F)=\{X \,|\,X \in M_{n\times kn}(\C), |F(X)|\leq 1\}$, and a 
$2kn^2$ dimensional lattice $L$ with a  fundamental parallelotope of volume one.

We  then  define the \emph{homogeneous minima} $\lambda(F,L)$ of $F$ with respect to the lattice $L$ by
$$
\lambda(F,L )=(\mathrm{inf}\{\lambda|\,\lambda>0, \mathrm{dim}(\R(\lambda S(F)\cap L))\geq 1\})^{\sigma},
$$
where $\R(\lambda S(F)\cap L)$ is the $\R$-linear space generated by  the elements in $\lambda S(F)\cap L$.
This allows us to define the \emph{absolute homogeneous minimum} 
$$
\lambda(F)=\sup_{\Vol(L)=1}\lambda(F,L).
$$

We can now see that all of our forms $f_1$, $f_2$ and $f_3$ are homogeneous forms. For the  Hermite invariant we have $\sigma=2$, for the product distance $\sigma=n$, and for the normalized minimum determinant $\sigma=n^2k$. We can also easily see that the constants \eqref{f1}, \eqref{f2} and \eqref{f3}
are absolute homogeneous minima of the corresponding forms.

These results suggest that there is a very general connection between information theory and geometry of numbers for different channel models. 
It seems that given a fading channel model, there exists a form whose absolute homogeneous minima provide a lower bound for the achievable rate using lattice codes.

\begin{remark}
The definitions for the geometry of numbers given in this section were stated for lattices in the space $M_{n\times nk}(\C)$, while usually the definitions are given in the space $\R^m$. This is however, just to keep 
our notation simple. 
The space $M_{n\times nk}(\C)$ can be identified with the space $\R^{2n^2k}$ and we could have given the definitions also in the traditional form using this identification.

\end{remark}

\section{Discussion and questions for further research} \label{conclusion}

 In this work we proved the existence of lattice codes achieving constant gap to capacity in ergodic fading channels. Unlike the case of existence results based on random coding, our finite codes are always built from the same family of lattices, irrespective of the SNR and even of the fading statistics. Hence, using the minimum determinant as a  design principle leads to extremely robust codes. In particular division algebra and number field codes have this robustness property.

However, our codes still have a considerable gap to capacity and further research is needed. Let us now point out a few directions this research can take next.

 In the case of single user channels the clearest goal is to improve our methods and close the gap to capacity. We note that this gap depends on several factors. First of all, the normalized minimum determinant affects the value of the gap. Second, our bound for the error probability is based on sphere packing and thus is suboptimal.  
 
Thus, the possible improvements to our construction are two-fold. In the first place, one could try to find families of lattices $L_{n,k}\subset M_{n,nk}(\C)$ with larger normalized minimum determinant, for instance by replacing the  
centers in our constructions with families of number fields having smaller discriminants. One can also consider more general examples of lattices than those arising from orders in division algebras: for example ideals of orders, or in the case of number field codes, ideals of the ring of algebraic integers. 
In the second place, in this paper we have not considered the issue of shaping. Improving the shaping properties of our lattices might lead to a better error probability bound.

Another approach is to relax  our minimum determinant code design criterion. Our codes are extremely robust and quite universal in the sense that they respond very well to any non pathological fading realization. This universality is of course a strength, but it could also lead to a situation where the codes are rather good for every channel, but not optimal for any. If we fix a channel model, it may be possible to 
weaken the design principle. 
This might allow us to consider larger ensembles of lattices and possibly to close the gap to capacity in this fixed channel model.

Our work was about explicit code constructions in the spirit of classical sphere packings \cite{CS}. However, it seems that even the existence of capacity achieving lattice codes in fading channels is an open question (see Section \ref{related}). 
In the case of AWGN channel  this question was solved only quite recently in \cite{Erez_Zamir} by assuming that the receiver and transmitter have access to a common source of randomness.  One of the key elements in the achievability result of \cite{Erez_Zamir} 
is the  Minkowski-Hlawka theorem, that can be used to prove the existence of lattices with certain properties. 
It is then a natural idea to generalize this approach by proving an analogue of the Minkowski-Hlawka theorem for fading channels. It seems to us that this problem is non-trivial.

In this paper we have considered block fading MIMO channels, but we hope that the methods developed here can be applied also in a more general setting. Let us now sketch an outline for possible generalizations.

The reduced Hermite invariant is a natural analogue of the classical Hermite invariant for fading channels. This concept can likely be generalized to other fading channel models, such as for example intersymbol interference channels. 
Given a fading channel we can ask what would be  the group (or set) $G$  that would represent the action of the channel, 
and define the corresponding reduced Hermite invariant $h_G$.  The next question is then to find lattices that would maximize this value.  
In  the case of the block fading channel, the problem was made more accessible by  Proposition \ref{motivation}, where we proved that $h_G$ can be seen as the minimum of  a certain homogeneous form. 
This line of thought suggests a general approach to turn the chase for capacity into a problem in geometry of numbers for different channel models. It also raises several questions.  For example we can ask which are the channel models where this approach can be applied and for which groups $G$ the reduced Hermite invariant corresponds to some homogeneous form.

Finally, the lattice codes proposed in this paper could have applications to other problems in information theory, such as coding for multiple access fading channels and for information theoretic security.


\renewcommand{\thesubsection}{\Alph{subsection}}
\appendix

\subsection{Proof of Theorem \ref{prop_positive_rate2}} \label{proof_positive_rate2}
With a similar approach as in the proof of Theorem \ref{prop_positive_rate}, we consider the following upper bound:
\begin{align}
& P_e \leq \mathbb{P}\Bigg\{ \norm{W}^2 \geq \left(\frac{d_{H}}{2}\right)^2\Bigg\} \leq \notag \\
& \leq \mathbb{P}\left\{ \frac{\norm{W}^2}{knn_r} \geq 1+\epsilon\right\} + \mathbb{P}\left\{ \frac{d_H^2}{4knn_r} < 1+\epsilon\right\} \label{sum2}
\end{align}
The first term in equation (\ref{sum2}) tends to zero exponentially fast when $k \to \infty$ since $2 \norm{W}^2 \sim \chi^2(2knn_r)$. We now focus on the second term in equation (\ref{sum2}), and begin by finding a lower bound on the minimum distance $d_H$ in the received constellation. \\
For all $i \in\{1,\ldots,k\}$, let $\lambda_{i,j}$, $j=1,\ldots,n$ be the singular values of $H_i^{\dagger}H_i$ with
$$ 0=\lambda_{i,1}=\cdots=\lambda_{i,n-n_r} < \lambda_{i,n-n_r+1} \leq \cdots \leq \lambda_{i,n},$$ 
and $l_{i,j}$ the singular values of $X_iX_i^{\dagger}$ with $$l_{i,1} \geq l_{i,2} \geq \cdots \geq l_{i,n}.$$
Using the mismatched eigenvalue bound \cite{Kose_Wesel,EKPKL}, we have
$$ \norm{H_i X_i}^2 \geq \sum_{j=1}^n \lambda_{i,j} l_{i,j}=\sum_{j=n-n_r+1}^n \lambda_{i,j} l_{i,j}.$$
Consequently, we find that
\begin{align}
&d_H^2 \geq \alpha^2 \min_{X \in L_{n,k} \setminus \{0\}} \sum_{i=1}^k \sum_{j=n-n_r+1}^n \lambda_{i,j} l_{i,j} \geq \notag \\
&\geq \alpha^2 n_r k \prod_{i=1}^k \prod_{j=n-n_r+1}^n (\lambda_{i,j} l_{i,j})^\frac{1}{n_r k} \label{d_H_bound2}
\end{align}
Using the NVD property of the code, we get
$$\prod_{i=1}^k \prod_{j=1}^{n} l_{i,j}=\prod_{i=1}^k \abs{\det X_i}^2 \geq 1$$
Therefore, we have the lower bound
{\allowdisplaybreaks
\begin{align*}
& \prod_{i=1}^k \prod_{j=n-n_r+1}^{n} l_{i,j}\geq \Bigg(\prod_{i=1}^k \prod_{j=1}^{n-n_r} l_{i,j}\Bigg)^{-1} \geq \\
&\geq \Bigg(\frac{1}{(n-n_r)k} \sum_{i=1}^k \sum_{j=1}^{n-n_r} l_{i,j}\Bigg)^{-(n-n_r)k} \geq \\
& \geq \left(\frac{Pn}{\alpha^2(n-n_r)}\right)^{-k(n-n_r)}
\end{align*}
}%
where we have used the arithmetic-geometric mean inequality and the power constraint $\alpha^2 \norm{X}^2=\alpha^2 \sum_{j=1}^n l_{i,j} \leq Pkn$. Replacing the previous expression in (\ref{d_H_bound2}), we obtain
\begin{align*}
& d_H^2 \geq \frac{\alpha^2 n_r k \prod_{i=1}^k \prod_{j=n-n_r+1}^n \lambda_{i,j}^\frac{1}{n_r k}}{\left(\frac{Pn}{\alpha^2(n-n_r)}\right)^{\frac{n-n_r}{n_r}}}=\\
&=(\alpha^2)^{\frac{n}{n_r}} \left(\frac{n-n_r}{Pn}\right)^{\frac{n-n_r}{n_r}}  n_r k \prod_{i=1}^k \det(H_iH_i^{\dagger})^\frac{1}{n_r k}
\end{align*}
The second term in (\ref{sum2}) can thus be upper bounded by 
\begin{align*} 
&\mathbb{P}\Bigg\{ \prod_{i=1}^k \det(H_iH_i^{\dagger})^\frac{1}{n_r k} < 4(1+\epsilon) \left(\frac{n}{\alpha^2}\right)^{\frac{n}{n_r}}\Big(\frac{P}{n-n_r}\Big)^{\frac{n-n_r}{n_r}} \Bigg\}\\
&=\mathbb{P}\Bigg\{ \frac{1}{k} \sum_{i=1}^k \log \det H_i H_i^{\dagger} < \log \frac{(4(1+\epsilon))^{n_r} n^nP^{n-n_r}}{\alpha^{2n}(n-n_r)^{n-n_r}}\Bigg\}
\end{align*}
By hypothesis the weak law of large numbers (\ref{WLLN2}) holds, i.e. $\frac{1}{k} \sum_{i=1}^k \log \det H_iH_i^{\dagger} \to \mu$ as $k \to \infty$. Thus, the error probability will vanish provided that for sufficiently large $k$, 
$$\log \frac{(4(1+\epsilon))^{n_r} n^nP^{n-n_r}}{\alpha^{2n}(n-n_r)^{n-n_r}} < \mu$$
Recalling that $\alpha^2 \geq\frac{C_{n,k}^{\frac{1}{n^2k}}P}{2^{\frac{R}{n}}C_L}$, the condition can be rewritten as
\begin{multline*}
R< \mu +n_r \log P- n_r\log 4(1+\epsilon) -n \log n C_L + \\ + \frac{\log C_{n,k}}{nk} +(n-n_r)\log(n-n_r)
\end{multline*}
Using Stirling's approximation (\ref{C_Stirling}), for large $k$ we have
$$\frac{\log C_{n,k}}{nk} \approx  n \log \pi e - n \log n -\frac{1}{2nk}\log 2\pi n^2 k $$
Asymptotically, we find that any rate
$$R< \mu +n_r \log \frac{P}{4(1+\epsilon)} -n \log \frac{n^2 C_L}{\pi e} +(n-n_r)\log(n-n_r)$$
is achievable. Since this is true for all $\epsilon>0$, this concludes the proof. \hspace*{\fill}~\IEEEQED

\begin{small}

\end{small}

\begin{IEEEbiographynophoto}{Laura Luzzi} received the degree (Laurea) in
Mathematics from the University of Pisa, Italy, in 2003 and the Ph.D.
degree in Mathematics for Technology and Industrial Applications from
Scuola Normale Superiore, Pisa, Italy, in 2007. From 2007 to 2012 she held
postdoctoral positions in T\'el\'ecom-ParisTech and Sup\'elec, France, and
a Marie Curie IEF Fellowship at Imperial College London, United Kingdom.
She is currently an Assistant Professor at ENSEA de Cergy, Cergy-Pontoise,
France, and a researcher at ETIS (ENSEA - Universit\'e de
Cergy-Pontoise- CNRS).\\
Her research interests include algebraic space-time coding and decoding
for wireless communications and physical layer security.
\end{IEEEbiographynophoto}

\begin{IEEEbiographynophoto}{Roope Vehkalahti}
received the M.Sc. and Ph.D. degrees from the University of Turku, Finland,
in 2003 and 2008, respectively, both in pure mathematics.

Since September 2003, he has been with the Department of
Mathematics, University of Turku, Finland.  In 2011-2012 he was visiting Swiss Federal Institute of Technology, Lausanne (EPFL).  His  research interest include  applications of algebra and number theory to information theory.
\end{IEEEbiographynophoto}

\end{document}